\author{%
Nina Pardal$^{1,2}$\and
Santiago Cifuentes$^2$\and
Edwin Pin$^{3}$\and
Maria Vanina Martinez$^4$ \and
Sergio Abriola$^2$\\
\affiliations
$^1$University of Sheffield, UK\\
$^2$Instituto de Ciencias de la Computación, UBA-CONICET, Argentina\\
$^3$Departamento de Matemática, CONICET, Universidad de Buenos Aires\\
$^4$Artificial Intelligence Research Institute, IIIA-CSIC, Bellaterra, Spain \\
\emails
n.pardal@sheffield.ac.uk,
\{scifuentes, sabriola, epin\}@dc.uba.ar,
vmartinez@iiia.csic.es
}
\pgfplotsset{compat = newest}
\theoremstyle{plain}
\newcommand{\eqdef}{\mathrel{{\mathop:}}=}
\newcommand{\defstyle}[1]{{\emph{#1}}}
\newcommand{\set}[1]{\{#1\}}
\newcommand{\tup}[1]{\langle#1\rangle}
\newcommand{\aGraph}{G}%
\newcommand{\aNode}{x}%
\newcommand{\aNodeb}{y}%
\newcommand{\aLabel}{\esLabel{a}} %Para edge labels
\newcommand{\aData}{\esDato{c}} %un data label /data value
\newcommand{\aFormula}{\ensuremath{\varphi}\xspace}
\newcommand{\aNodeExpression}{\aFormula}
\newcommand{\aFormulab}{\ensuremath{\psi}\xspace}
\newcommand{\aNodeExpressionb}{\aFormulab}
\newcommand{\aPath}{\ensuremath{\alpha}\xspace}
\newcommand{\aPathb}{\ensuremath{\beta}\xspace}
\newcommand{\aRestrictionSet}{\ensuremath{R}\xspace}%
\newcommand{\aRestrictionSetPaths}{\ensuremath{P}\xspace}%
\newcommand{\aRestrictionSetNodes}{\ensuremath{N}\xspace}%
\newcommand{\wrt}{\text{w.r.t.\ }}
\definecolor{darkred}{rgb}{0.55, 0.0, 0.0}
\definecolor{green}{RGB}{0,120,0}
\definecolor{blue}{RGB}{0,120,150}
\definecolor{hlyellow}{RGB}{250, 250, 190}
\definecolor{editcolor}{RGB}{100,190,165}
\newcommand{\NP}{\textsc{NP}\xspace}
\newcommand{\coNP}{\textsc{coNP}\xspace}
\newcommand{\coNPhard}{\textsc{coNP}-hard\xspace}
\newcommand{\coNPcomplete}{\textsc{coNP}-c\xspace}
\newcommand{\NPhard}{\textsc{NP}-hard\xspace}
\newcommand{\NPcomplete}{\textsc{NP}-c\xspace}
\newcommand{\PTIME}{\textsc{PTIME}\xspace}
\newcommand{\piptwo}{$\Pi^p_2$}
\newcommand{\piptwoComplete}{$\Pi^p_2$-c}
\newcommand{\deltaptwo}{$\Delta^p_2$}
\newcommand{\deltaptwoComplete}{$\Delta^p_2$-c}\newcommand{\deltaptwologComplete}{$\Delta^p_2[\log n]$-c}
\newcommand{\problemRepairExistence}[1]{$\exists_{#1}$\textsc{repair}}
\newcommand{\problemRepairChecking}[1]{$#1$-\textsc{repairChecking}\xspace}
\newcommand{\problemRepairComputing}[1]{$#1$-\textsc{repairComputing}\xspace}
\newcommand{\problemCQAsubset}[1]{\textsc{CQA$_{#1}$-repair}}
\newcommand{\Gregxpath}{\text{GXPath}\xspace}
\newcommand{\Gposregxpath}{\text{GXPath$^{pos}$}\xspace}
\newcommand{\GNodeXPath}{\text{GXPath$^{node}$}}
\newcommand{\GNodePosXPath}{\text{GXPath$^{node-pos}$}}
\newcommand{\down}{\downarrow\xspace}
\newcommand{\expNodoEnCamino}[1]{\ensuremath{[#1]}}
\newcommand{\comparacionCaminos}[1]{\ensuremath{\langle#1\rangle}}
\newcommand{\esLabel}[1]{{\textsc{#1}}} % para edge-labels
\newcommand{\esDato}[1]{{\textsc{\footnotesize{#1}}}} %Para representar un dato/data value/data label
\newcommand{\esDatoIgual}[1]{{\ensuremath{\esDato{#1}^=}}} %Operacion de nodo que dice que ahi vale ese dato
\newcommand{\esDatoDistinto}[1]{{\ensuremath{\esDato{#1}^{\neq}}}}
\newcommand{\dataFunction}{D} %El nombre de la función que usamos para asignar valores de datos
\newcommand{\complementoCamino}[1]{\ensuremath{\overline{#1}}}
\newcommand{\pathComplement}[1]{\complementoCamino{#1}}
\newcommand{\union}{\ensuremath{\cup}}
\newcommand{\unionCaminos}{\ensuremath{\cup}}
\newcommand{\pathUnion}{\unionCaminos}
\newcommand{\pathIntersection}{\ensuremath{\cap}}
\newcommand{\entoncesNodo}{\ensuremath{\Rightarrow}} %\rightarrow
\newcommand{\entoncesCamino}{\ensuremath{\Rrightarrow}} %\Rightarrow
\newcommand{\semantics}[1]{\ensuremath{\llbracket{#1}\rrbracket}} %
\newcommand{\N}{\mathbb{N}}
\newcommand{\parts}[1]{\mathcal{P}(#1)} %Conjunto de partes
\newcommand{\labelComodin}{\_} % representa bajar por cualquier Edge/Eje
\newcommand{\repairs}[3]{Rep_{#1}(#2,#3)}
\newcommand{\prefrepair}[1]{\ensuremath{{#1}\text{-repair}}}\xspace%
\newcommand{\weight}{\ensuremath{w}}
\newcommand{\multiset}[1]{\ensuremath{\mathcal{M}_{<\infty}(#1)}} %Multisets finitos
\newcommand{\lessEqualMultiset}{\ensuremath{\le_{mset}}}
\newcommand{\edgeDataMultiset}[1]{\ensuremath{#1^{\mathcal{M}}}}
\newcommand{\multisetCriteria}{\ensuremath{\preccurlyeq_{\mathcal{M}}}}
\newcommand{\queryLanguage}{\mathcal{Q}}
\newcommand{\prefRepairsSet}[3]{\text{Rep}_{#1}(#2, #3)}
\newcommand{\subsetInclusionCriteria}{\subseteq}
\newcommand{\cardinalityCritera}{\leq}
\newcommand{\priorityzedInclusionCriteria}{\subseteq_P}
\newcommand{\weightCriteria}{\leq_w}
\newcommand{\priorityzedCardinalityCriteria}{\leq_P}
\theoremstyle{definition}
\newtheorem{example}{Example}
\theoremstyle{plain} %Nota amsthm es necesario para Theoremstyle
\newtheorem{theorem}{Theorem}
\newtheorem{proposition}[theorem]{Proposition}
\newtheorem{lemma}[theorem]{Lemma}
\newtheorem{corollary}[theorem]{Corollary}
\newtheorem{observation}[theorem]{Observation}
\newtheorem{definition}[theorem]{Definition}
\newtheorem{remark}[theorem]{Remark}
\newcommand{\CNF}{\textsc{3CNF}\xspace}
\title{Computational Complexity of Preferred Subset Repairs on Data-Graphs}
\begin{document}

\maketitle

\begin{abstract}
Preferences are a pivotal component in practical reasoning, especially in tasks that involve decision-making over different options or courses of action that could be pursued. In this work, we focus on repairing and querying inconsistent knowledge bases in the form of graph databases, which involves finding a way to solve conflicts in the knowledge base and considering answers that are entailed from every possible repair, respectively. Without {\it a priori} domain knowledge, all possible repairs are equally preferred. Though that may be adequate for some settings, it seems reasonable to establish and exploit some form of preference order among the potential repairs. We study the problem of computing prioritized repairs over graph databases with data values, using a notion of consistency based on \Gregxpath expressions as integrity constraints. We present several preference criteria based on the standard subset repair semantics, incorporating weights, multisets, and set-based priority levels. We show that it is possible to maintain the same computational complexity as in the case where no preference criterion is available for exploitation. Finally, we explore the complexity of consistent query answering in this setting and obtain tight lower and upper bounds for all the preference criteria introduced.
% The problem of repairing inconsistent knowledge bases has a long history within the communities of database theory and knowledge representation and reasoning, especially from the perspective of structured data.
% However, as the data available in real-world domains becomes more complex and interconnected, the need naturally arises for developing new types of repositories, representation languages, and semantics, to allow for more suitable ways to query and reason about it. 
% Graph databases provide an effective way to represent relationships among semi-structured data, and allow processing and querying these connections efficiently.  
% In this work, we focus on the problem of computing prioritized repairs over graph databases with data values, using a notion of consistency based on \Gregxpath expressions as integrity constraints. We present several preference criteria based on the standard subset repair semantics, incorporating weights, multisets, and set-based priority levels. We study the most common repairing tasks, showing that it is possible to maintain the same computational complexity as in the case where no preference criterion is available for exploitation. To complete the picture, we explore the complexity of consistent query answering in this setting and obtain tight lower and upper bounds for all the preference criteria introduced.
\end{abstract}

\section{Introduction}

%\santi{Sacar demos, poner idea
%
% Agregar notaciones: $V_G$ para referirnos a los nodos de $G$ y la idea de grafo inducido por un conjunto de nodos.
% }

Many current real-world applications give rise to massive data repositories whose inherent nature leads to the presence of uncertainty or inconsistency. %, independent of the particular domain of application.
The high volume of interconnected data expected to be handled by intelligent management systems represents an important obstacle when attempting to reason with these massive repositories. Additionally, the richness of available data requires developing applications that go well beyond semantic indexing and search, and involves advanced reasoning tasks on top of existing data, some of which may need to exploit domain knowledge, for instance in the form of preferences, in order to be effective in real-world applications. 

Enriching the expressive power of traditional relational databases, alternative data models such as graph databases are deployed in many modern applications where the topology of the data is as important as the data itself, such as social networks analysis~\cite{fan2012graph}, data provenance~\cite{anand2010techniques}, and the Semantic Web ~\cite{arenas2011querying}. The structure of the database is commonly queried through navigational languages such as \textit{regular path queries} or RPQs~\cite{barcelo2013querying} that can capture pairs of nodes connected by some specific kind of path. These query languages can be extended to be more expressive, with a usual trade-off in the complexity of the evaluation. For example, C2RPQs  are a natural extension of RPQs that add to the language the capability of traversing edges backwards and closing the expressions under conjunction (similar to relational CQs). When considering the combined complexity (i.e. both the data-graph and the query are considered as inputs of the problem), the evaluation of RPQs can be done in polynomial time while evaluating C2RPQs is an \NP-complete problem~\cite{barcelo2012expressive}.  

As in the relational case, the data is expected to preserve some semantic structure related to the reality it intends to represent. These \textit{integrity constraints} can be expressed in graph databases via \textit{path constraints}~\cite{abiteboul1997regular,buneman2000path}.
RPQs and its most common extensions (C2RPQs and NREs \cite{barcelo2012relative}) can only act upon the edges of the graph, leaving behind any possible interaction with data values in the nodes. This inspired the design of query languages for \emph{data-graphs} (i.e. graph databases where data lies both in the edges and the nodes), such as REMs and \Gregxpath~\cite{libkin2016querying}.
When a database does not satisfy the imposed integrity constraints, a possible approach is to search for a database that satisfies the constraints and that is `as close as possible to the original one'. This new database is called a \emph{repair}~\cite{arenas1999consistent}, and to define it properly one has to precisely define the meaning of `as close as possible'. 
Several notions of repairs have been developed in the literature, among others, set-based repairs~\cite{Cate:2015}, attribute-based repairs~\cite{Wijsen:2003}, and cardinality-based repairs~\cite{Lopatenko07complexityof}.
When considering set-based repairs $G'$ of a graph database $G$ under a set of expressions $R$, a natural restriction of the problem is when $G'$ is a subgraph of $G$; such repairs are usually called \textit{subset} repairs~\cite{barcelo2017data,Cate:2015}.
Since repairs may, in general, not be unique, 
some applications may require a methodology to choose the more adequate(s) among them. One way to do this is to impose an ordering over the set of repairs and look for an optimum one over such ordering. There is a significant body of work on preferred repairs
for relational databases~\cite{flesca2007preferred,staworko2012prioritized,DBLP:journals/tods/Chomicki03} and other types of logic-based formalisms~\cite{brewka89,bienvenu2014querying,DBLP:conf/kr/LukasiewiczMM23}. 
However, to the best of our knowledge, there is no such work focused on graph databases or data-graphs.
We address the problem of finding a preferred subset repair with different types of prioritization, which encompasses some prominent preorders studied in the literature~\cite{DBLP:conf/kr/LukasiewiczMM23}. Moreover, we propose a preference criteria based on comparing the underlying multisets of facts of the data-graphs.

Furthermore, as repairing a database may not always be possible or desirable, e.g.\ we may not have permissions to modify the data, alternative approaches are needed to be able to query a potentially inconsistent database without modifying it. \emph{Consistent query answering} (or CQA),  puts in practice the notion of {\em cautious reasoning}: given a database $D$, a set of integrity constraints, and a query $q$, one wishes to find the set of answers to $q$  that lie in {\em every} possible repair of the database (a.k.a.\ {\em consistent answers}). This problem has been studied both in relational models and in graph databases, for usual repair notions such as set-based repairs, and by restricting the queries and classes of constraints to classes extensively studied in data exchange and data integration~\cite{barcelo2017data,Cate:2015}.  
We extend our study of repair 
prioritization to the computation of consistent answers under the same setting. 

This paper is organized as follows. In Section~\ref{sec:preliminaries} we introduce the necessary preliminaries and notation for the syntax and semantics for our data-graph model, as well as the definitions of consistency and prioritization. We define prioritized subset repairs for data-graphs, and present the different problems that can be studied in this setting.
In Section~\ref{sec:repairs} we study the computational complexity of determining the existence of a repair, repair checking and computing repairs for different preference criteria and fragments of \Gregxpath. Analogously, in Section~\ref{sec:cqa} we present a comprehensive study of the complexity of CQA in this setting, deferring
details on technical proofs to the Appendix provided in the supplementary material.
Conclusions and future work directions are discussed in Section~\ref{sec:conclusions}.

\section{Preliminaries}\label{sec:preliminaries}

Let $\Sigma_e$ be a fixed, non-empty set of edge labels, and $\Sigma_n$ a countable (either finite or infinite enumerable), non-empty set of data values (sometimes referred to as data labels), such that $\Sigma_e \cap \Sigma_n = \emptyset$. A \defstyle{data-graph}~$G$ is a tuple $(V,L,\dataFunction)$ where $V$ is a set of nodes, $L$ is a mapping from $V \times V$ to $\parts{\Sigma_e}$ defining the edges of the graph, and $\dataFunction$ is a function mapping the nodes from $V$ to data values from $\Sigma_n$. We denote by $E$ the set of edges of the data-graph, i.e. $E = \{(v,\aLabel,w) : v,w \in V, \aLabel \in L(v,w)\}$.
The \defstyle{cardinality} of a data-graph $\aGraph$, denoted by $|\aGraph|$, is given by $|V|+|E|$.

A data-graph $\aGraph = (V,L,D)$ is a \defstyle{subset} of a data-graph $\aGraph'=(V',L',D')$ (denoted by $\aGraph \subseteq \aGraph'$) if and only if $V \subseteq V'$, and for any $v,v' \in V$, we have $L(v,v') \subseteq L'(v,v')$ and $D(v) = D'(v)$. In this case, we also say that $\aGraph'$ is a \defstyle{superset} of $\aGraph$.
We denote the \textit{facts} related to a data-graph $G$ by $G^f = V \cup E$; and when it is clear from the context, we will omit the superscript $f$.
Sometimes we will refer to the nodes of a data-graph $G$ as $V_G$. If $V' \subseteq V$ we define the data-graph induced by $V'$ given $G$ as $G' = (V', L|_{V' \times V'}, D|_{V'})$. For simplicity, we might refer to a node using its data-value under the reasonable assumption of unique names.

%\nina{agregar def de inclusion de data graphs, intersecction de data graphs, y cardinality de un dg}

\defstyle{Path expressions} of \Gregxpath are given by:
$$ \aPath = \varepsilon \mid \labelComodin \mid \aLabel \mid \aLabel^- \mid \expNodoEnCamino{\aFormula} \mid \aPath \cdot\aPath \mid \aPath\cup\aPath \mid \aPath\cap\aPath \mid \aPath^* \mid \pathComplement{\aPath} \mid \aPath^{n,m} $$
% \begin{center}
%     $\aPath, \aPathb$ = $\varepsilon$ $|$ $\labelComodin$ $|$ $\aLabel$ $|$ $\aLabel^{-}$ $|$ $\expNodoEnCamino{\aFormula}$ $|$ $\aPath$ . $\aPathb$ $|$ $\aPath \pathUnion \aPathb$ $|$ $\aPath \pathIntersection \aPathb$ $|$ $\aPath^{*}$ $|$ $\pathComplement{\aPath}$ $|$
%     $\aPath^{n,m}$ 
% \end{center} 
where $\aLabel$ iterates over $\Sigma_e$, $n,m$ over $\mathbb{N}$ and $\aFormula$ is a \defstyle{node expression} defined by the following grammar:
$$ \aFormula = \neg \aFormula \mid \aFormula \wedge \aFormula \mid 
    \aFormula \vee \aFormula  \mid  \esDatoIgual{\aData} \mid \esDatoDistinto{\aData} \mid \comparacionCaminos{\aPath} \mid \comparacionCaminos{\aPath = \aPathb} \mid  \comparacionCaminos{\aPath \neq \aPathb} $$
% \begin{center}
%     $\aFormula, \aFormulab$ = $\neg \aFormula \mid \aFormula \wedge \aFormulab$ $|$ $\comparacionCaminos{\aPath}$ $|$ $\esDatoIgual{\aData}$ $|$ $\esDatoDistinto{\aData}$ $|$ $\comparacionCaminos{\aPath = \aPathb}$ $|$ $\comparacionCaminos{\aPath \neq \aPathb}$ $|$
%     $\aFormula \vee \aFormulab$
% \end{center}
%
where $\aPath$ and $\aPathb$ are path expressions (i.e. path and node expressions are defined by mutual recursion), and $c$ iterates over $\Sigma_n$. 
The size of path expressions $\aPath$ (resp.\ node expressions $\aNodeExpression$) is denoted by $|\aPath|$ (resp.\ $|\aNodeExpression|$), and is defined as the number of symbols in $\aPath$ (resp.\ $\aNodeExpression$), or equivalently as the size of its parsing tree. %  (resp.\ $\aNodeExpression$).
% (equivalently, we could define them as the sizes of parse trees of those expressions)
%\nina{how is the "size" of the expression R formally defined}
The semantics of these languages are defined  in \cite{libkin2016querying} similarly as the usual regular languages for navigating graphs \cite{barcelo2013querying}, while adding some extra capabilities such as the complement of a path expression $\pathComplement{\aPath}$ and data tests. The $\comparacionCaminos{\aPath}$ operator is the usual one for \textit{nested regular expressions} (NREs) used in \cite{barcelo2012relative}. 
Given a data-graph $\aGraph=(V,L,D)$, the semantics of \Gregxpath expressions are:
\footnotesize
\begin{align*}
    %Node expressions
    %Data values
    %Double condition
    \semantics{\esDatoIgual{c}}_G \eqdef{}& \set{u\in V \mid D(u)=\esDato{c}} 
    \\ %\qquad\qquad\qquad
    \semantics{\esDatoDistinto{c}}_G \eqdef{}&  V\setminus \semantics{\esDatoIgual{c}}_G\\
    %Logic operators
    \semantics{\varphi\land\psi}_G \eqdef{}&  \semantics{\varphi}_G\cap\semantics{\psi}_G 
    \\ %\!\qquad\qquad\qquad\qquad 
    \semantics{\varphi\lor\psi}_G \eqdef{}&  \semantics{\varphi}_G\cup\semantics{\psi}_G \\
    %Projection & Negation
    \semantics{\tup{\alpha}}_G \eqdef{}&  \set{u\in V\mid \exists v\in V.(u,v)\in\semantics{\alpha}_G} 
    \\ %\quad 
    \semantics{\lnot\varphi}_G \eqdef{}&  V\setminus \semantics{\varphi}_G \\
    %Path expressions
    %Epsilon & floor
    \semantics{\varepsilon}_G \eqdef{}&  \{(u,u) \mid u\in V\} 
    \\ %\qquad\qquad\qquad\quad\:\:\:\,
    \semantics{\labelComodin}_G \eqdef{}& \{(u,v) \mid L(u,v)\neq\emptyset\} \\
    %Atomic edge labels & inverses
    \semantics{\aLabel}_G \eqdef{}&  \set{(u,v)\in V^2\mid \aLabel \in L(u,v)} 
    \\ %\qquad\,\,\,\, 
    \semantics{\aLabel^-}_G \eqdef{}&  \set{(u,v)\in V^2\mid \aLabel \in L(v,u)} \\
    %Union, interception, concatenation & complement
    \semantics{\alpha\star\beta}_G \eqdef{}&  \semantics{\alpha}_G\star\semantics{\beta}_G \text{ for }\star \in \set{\circ,\cup,\cap} 
    \\ %\quad\:\:\:\:\:
    \semantics{\overline{\alpha}}_G \eqdef{}& V^2 \setminus \semantics{\alpha}_G \\
    %Test 
    \semantics{\expNodoEnCamino{\aFormula}}_G \eqdef{}&  \set{(u,u) \mid u\in V,u\in\semantics{\varphi}_G} 
    \\ %\qquad\, 
    \textstyle\semantics{\alpha^{n,m}}_G \eqdef{}& \bigcup_{k=n}^m (\semantics{\alpha}_G)^k \\
    %Kleene star
    \semantics{\alpha^*}_G \eqdef{}&  \text{the reflexive transitive closure of }\semantics{\alpha}_G.\\
    \semantics{\tup{\alpha = \beta}}_G \eqdef{}&  \set{u\in V\mid \exists u_1,u_2\in V, (u,u_1)\in\semantics{\alpha}_G, \\&(u,u_2)\in\semantics{\beta}_G, D(u_1)=D(u_2)}\\
    %Double condition negated
    \semantics{\tup{\alpha \neq \beta}}_G \eqdef{}&  \set{u\in V\mid \exists u_1,u_2\in V,
    (u,u_1)\in\semantics{\alpha}_G, \\& (u,u_2)\in\semantics{\beta}_G, D(u_1)\neq D(u_2)}
\end{align*}
\normalsize

For the sake of legibility of path expressions, for any $\aLabel \in \Sigma_e$ we will often use the notation $\down_\aLabel$ rather than simply $\aLabel$. %\sergio{Falta aclarar qué son $\entoncesNodo$ y $\entoncesCamino$, no? Después se usan sin explicar}
We also denote by $\aPath \entoncesCamino \aPathb$ the path expression $\aPathb \pathUnion \pathComplement{\aPath}$, and with $\aNodeExpression \entoncesNodo \aNodeExpressionb$ the node expression $\aNodeExpressionb \lor \neg \aNodeExpression$.
The positive fragment of $\Gregxpath$, denoted by $\Gposregxpath$, is obtained by removing the productions $\pathComplement{\aPath}$ and $\lnot \varphi$. Moreover, we denote by $\GNodeXPath$ the set of all node expressions from \Gregxpath{}, and \GNodePosXPath{} the intersection $\Gposregxpath \cap \GNodeXPath{}$ (i.e. the set of all positive node expressions).

\begin{definition}[Consistency]\label{def:consistency_global}
Let $\aGraph$ be a data-graph and $\aRestrictionSet = \aRestrictionSetPaths \union \aRestrictionSetNodes \subseteq \Gregxpath$ a finite set of restrictions, where $\aRestrictionSetPaths$ and $\aRestrictionSetNodes$ consist of path and node expressions, respectively. 
We say that $(\aGraph, \aRestrictionSet)$ is \defstyle{consistent} with respect to\ $\aRestrictionSet$, noted as $\aGraph \models \aRestrictionSet$, if the following conditions hold: 
\begin{itemize}
    \item $\forall \aNode \in V_\aGraph$ and $\aNodeExpression \in \aRestrictionSetNodes$, we have that $\aNode \in \semantics{\aNodeExpression}$ 
    \item $\forall \aNode,\aNodeb \in V_\aGraph$ and $\aPath \in \aRestrictionSetPaths$, we have that $(\aNode,\aNodeb) \in \semantics{\aPath}$
\end {itemize}
Otherwise we say that $\aGraph$ is inconsistent \wrt $\aRestrictionSet$. 
\end{definition}

This consistency notion can be evaluated in polynomial time: 

\begin{lemma}[\cite{libkin2016querying}, Theorem 4.3]\label{lemma:evaluate_consistency_polynomial}
    Given a data-graph $G$ and an expression $\eta \in \Gregxpath$, there is an algorithm that computes the set $\semantics{\eta}_G$ in polynomial time on the size of $G$ and $\eta$.
    As a consequence, we can decide whether $G \models R$ in polynomial time on the size of $G$~and~$R$.
\end{lemma}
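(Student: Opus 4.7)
The plan is to proceed by structural induction on the parse tree of $\eta$, computing for every subexpression a table of its semantics: a subset of $V_G$ if it is a node expression, or a subset of $V_G \times V_G$ if it is a path expression. Since there are at most $|V_G|^2 + |V_G|$ such entries per subexpression, each table fits in space polynomial in $|G|$, and the parse tree has $O(|\eta|)$ nodes, so a bottom-up evaluation that spends polynomial time per node yields the overall polynomial bound.

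For the base cases I would show that $\semantics{\esDatoIgual{c}}_G$, $\semantics{\esDatoDistinto{c}}_G$, $\semantics{\varepsilon}_G$, $\semantics{\labelComodin}_G$, $\semantics{\aLabel}_G$, and $\semantics{\aLabel^-}_G$ are all readable directly from $V$, $L$, and $D$ in time linear in $|G|$. For the inductive step, Boolean combinations ($\land, \lor, \neg$ on node expressions; $\cup, \cap, \overline{\,\cdot\,}$ on path expressions) are just set operations on the previously computed tables. Composition $\alpha \cdot \beta$ is a relational product, computable in $O(|V_G|^3)$, and $\expNodoEnCamino{\varphi}$ and $\tup{\alpha}$, $\tup{\alpha = \beta}$, $\tup{\alpha \neq \beta}$ are all polynomial-time scans over the precomputed tables.

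The main obstacle I expect is handling the iterative operators $\alpha^*$ and $\alpha^{n,m}$ without blowing up the running time. For $\alpha^*$, I would compute the reflexive transitive closure of $\semantics{\alpha}_G$ by a Warshall-style fixed-point iteration, which terminates in at most $|V_G|$ rounds because once two relations on a set of size $|V_G|$ agree, composition cannot introduce new pairs. For $\alpha^{n,m}$, the key observation is again that the sequence of relations $(\semantics{\alpha}_G)^k$ takes values in the finite lattice of subsets of $V_G^2$, so we only need to iterate up to $\min(m, |V_G|^2)$; alternatively, one can use repeated squaring so the cost is polynomial even when $n, m$ are given in binary.

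The consequence for consistency follows immediately: given $R = P \cup N$, for each $\varphi \in N$ we evaluate $\semantics{\varphi}_G$ in polynomial time and check $V_G \subseteq \semantics{\varphi}_G$, and for each $\alpha \in P$ we evaluate $\semantics{\alpha}_G$ and check $V_G \times V_G \subseteq \semantics{\alpha}_G$. Since $|R|$ is part of the input size and each evaluation and containment test is polynomial in $|G|$ and in the corresponding expression, the whole procedure runs in polynomial time in $|G| + |R|$.
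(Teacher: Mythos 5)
The paper does not prove this lemma at all: it is imported verbatim as Theorem~4.3 of the cited reference on querying graphs with data, so there is no in-paper argument to compare against. Your reconstruction is the standard and correct proof of that result --- bottom-up evaluation over the parse tree with one relation table per subexpression --- and the one genuinely delicate point, the counter operator $\alpha^{n,m}$ with exponents possibly written in binary, is the right thing to flag; your fix is sound, since $\bigcup_{k=n}^{m}R^k$ can be obtained from $R^n$ (repeated squaring) composed with the monotone partial unions, which stabilize within $|V_G|^2$ steps. The derivation of the consistency check from the evaluation algorithm also matches exactly how the paper uses the lemma (Definition~\ref{def:consistency_global} reduces to the two containment tests you describe), so there is nothing missing.
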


\begin{example}
Consider the film database depicted in Figure \ref{figure:movieDB}, where the nodes represent people from the film industry (such as actors or directors) or movies.

\begin{figure}[ht]
    \centering
    \scalebox{0.85}{
    \begin{tikzpicture}[node distance={25mm}, thick, main/.style = {draw, rectangle}, scale=0.65, every node/.append style={transform shape}]
        \node[main] (Hoffman) {Hoffman};
        \node[main] (Actor) [left of=Hoffman] {Actor};
        \node[main] (Phoenix) [below left of=Actor] {Phoenix};
        \node[main] (Robbie) [below right of=Hoffman] {Robbie};
        \node[main] (Babylon) [above right of=Robbie] {Babylon};
        \node[main] (The Master) [above of=Hoffman] {The Master};
        \node[main] (film) [right of=Babylon] {Film};
        \node[main] (Anderson) [above of=film] {Anderson};
        \node[main] (Chazelle) [right of=Anderson] {Chazelle};
        
        \draw[->] (Hoffman) -- (Actor) node[midway, above=0.2pt, sloped]{type};
        
        \draw[->] (Phoenix) -- (Actor) node[midway, above=0.1pt, sloped] {type};
        
        \draw[->] (Robbie) -- (Actor) node[midway, above=0.2pt, sloped] {type};

        \draw[->, bend right=40] (Actor) edge  node[midway, above=0.2pt, sloped] {type} (Robbie);
        
        %\draw[->] (Actor) edge node[midway, above=1pt, sloped] {acts\_in} (The Master);
        
        \draw[->, bend left=50] (Phoenix) edge  node[midway, above=0.1pt, sloped] {acts\_in} (The Master) ;
        
        \draw[->] (Hoffman) -- (The Master) node[midway, sloped, above=0.1pt]{acts\_in};
        
        \draw[->] (The Master) -- (Anderson) node[midway, above=0.1pt] {directed\_by};
        
        \draw[->] (Babylon) -- (Chazelle) node[midway, above=0.1pt, sloped] {directed\_by};
        
        \draw[->] (Robbie) -- (Babylon) node[midway, above=0.1pt, sloped] {acts\_in};
        
        \draw[->] (Babylon) -- (film) node[midway, below=0.1pt] {type};
        
        \draw[->, bend right=10] (The Master) edge node[midway, above=0.1pt, sloped] {type} (film) ;
        
        \draw[->, bend left=20] (The Master) edge node[midway, above=0.1pt, sloped] {directed\_by} (Chazelle);
    \end{tikzpicture}
    }
    \caption{A film data-graph.}
    \label{figure:movieDB}
\end{figure}
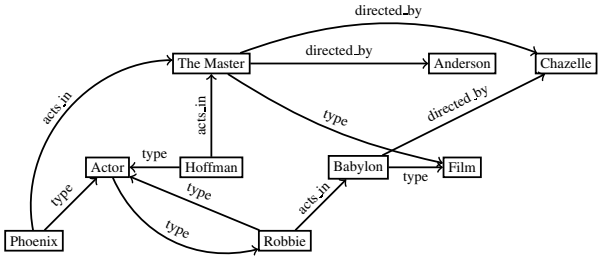
 
If we want to extract a subgraph that preserves only actors who have worked with (Philip Seymour) Hoffman in a film directed by (Paul Thomas) Anderson, then we want the following formula to be satisfied:
\footnotesize
\begin{align*}
 \aNodeExpression &= \comparacionCaminos{\down_\esLabel{type}[\esDato{actor}^=]}   \entoncesNodo \\ 
 &  \comparacionCaminos{\down_\esLabel{acts\_in}[\comparacionCaminos{\down_\esLabel{directed\_by} [\esDato{Anderson}^=]}] \down_\esLabel{acts\_in}^- [\esDato{Hoffman}^=]}
\end{align*}

\normalsize
Since Robbie did not work with Hoffman in a film directed by Anderson, $\aNodeExpression$ is not satisfied in the film data-graph, hence we do not have consistency \wrt $\{ \aNodeExpression \}$. Observe that the restriction also applies to Hoffman, thus it is required that he is featured in at least one film by Anderson to satisfy the constraint $\aNodeExpression$. 

Other reasonable restrictions are that nodes with ``data types'' such as $\esDato{Actor}$ should not have outgoing edges, or that nodes representing films should have an outgoing edge indicating who directed it. The first constraint can be expressed as
\(
\psi_1 = \esDatoIgual{\esDato{Actor}} \vee \esDatoIgual{\esDato{Film}} \entoncesNodo \lnot \comparacionCaminos{\down_\labelComodin},
\)
while the second one can be expressed by 
\(
\psi_2 = \comparacionCaminos{ \down_\esLabel{type}[\esDato{Film}^=] } \entoncesNodo \comparacionCaminos{\down_{\esLabel{directed\_by}}}.
\)
Observe that the first constraint is not satisfied by the data-graph as $\esDato{Actor} \notin \semantics{\psi_1}$, while $\psi_2$ is satisfied. Another possible restriction is that every film should have a unique director, which can be written as $\psi_3 = \comparacionCaminos{\down_\esLabel{type} [\esDato{Film}^=] } \entoncesNodo \lnot \comparacionCaminos{\down_\esLabel{directed\_by} \neq \down_{\esLabel{directed\_by}} }$, and is not satisfied by the node representing the film \esDato{The Master}. %\looseness=-1
\end{example}

\subsection{Types of preferences}\label{sec:types-preferences}
We focus on the problem of finding a consistent subset of a data-graph, introducing different minimality notions defined via \emph{prioritizations} given by preorders.
A \emph{preorder $\preccurlyeq$ over a set $S$} is a reflexive and transitive relation in $S\times S$.
Given a preorder $\preccurlyeq$ over data-graphs and two data-graphs $G_1,G_2$, we note $G_1 \prec G_2$ if $G_1 \preccurlyeq G_2$ and it is not the case that $G_2 \preccurlyeq G_1$.
The notion of prioritized subset repair in data-graphs can be defined as follows:
%\santi{Me parece que no deberiamos hablar de \textbf{subset} repairs, por lo menos a nivel de las definiciones tecnicas. Lo digo porque lo que fuerza al repair a ser un subconjunto es el criterio de preferencias en si (es mas, el simbolo $\subseteq$ no se esta usando en ningun lado de la definicion xD), asi que no ganamos nada agregando esa palabra. De paso simplificamos notacion. El trabajo de Molinaro lo hace de esta forma. Igual habria que aclarar que todos los criterios de preferencia que estudiamos fuerzan a los repairs a ser subconjuntos del grafo original.}
%\nina{Es cierto que decir 'subset repair' se refiere a una semantica de repacacion (buscar un subconjunto+maximal por set inclusion). Pero no es verdad en nuestro caso que estamos forzando la inclusion a traves de las preferencias.}
%\nina{adaptar esta def para que incluya superset, con resp. entre parentesis}
\begin{definition}[Prioritized subset repairs]
Let $\aRestrictionSet$ be a set of restrictions and $\aGraph$ a data-graph. We say that $\aGraph'$ is a \defstyle{prioritized subset repair} of $\aGraph$, denoted by \defstyle{\prefrepair{\preccurlyeq}} %or $\subseteq$-repair 
if: 
\begin{itemize}
    \item $(\aGraph',\aRestrictionSet)$ is \defstyle{consistent} (i.e. $\aGraph' \models \aRestrictionSet$).    
    \item $\aGraph' \subseteq \aGraph$.        
   \item There is no data-graph $\aGraph''\neq\aGraph'$ such that $(\aGraph'',\aRestrictionSet)$ is consistent and $ \aGraph' \prec \aGraph'' \preccurlyeq \aGraph$.
\end{itemize}
We denote the set of prioritized subset repairs of $\aGraph$ with respect to $\aRestrictionSet$ and preorder $\preccurlyeq$ as $\prefRepairsSet{\preccurlyeq}{G}{R}$.
\end{definition}

Given a data-graph $G=(V, L, D)$, a \defstyle{structural prioritization} (or simply a prioritization) $P = (P_1,...,P_\ell)$ of $G$ is a partition of $G$ into priority levels $P_i$, where $P_i \subseteq V \cup E$ for $1\leq i \leq n$. In this prioritization, $P_1$ contains the most reliable facts, whilst $P_\ell$ contains the least reliable facts of $G$.
%On the other hand, a \defstyle{data-prioritization} $P_d= (P_1,...,P_n)$ is a partition of $\Sigma_e \sqcup \Sigma_n$.

%Given a prioritization $P = (P_1,...,P_n)$ of the data-graph $G=(V, L, D)$, that is, $P$ is a partition of $G$ into the priority levels $P_i$, where $P_1$ contains the most reliable facts, and $P_n$ contains the least reliable facts of $G$.
%
%Given a function $\weight: \Sigma_e \sqcup \Sigma_n \to \N$ (where $\sqcup$ denotes disjoint union), we extend $w$ to any finite data-graph $G = (V,L,D)$ over $\Sigma_e$ and $\Sigma_n$ as
%\(
%\weight(G) = \sum_{x,y \in V} \left(\sum_{z \in L(x,y)} \weight(z) \right) + \sum_{x \in V} \weight(D(x)).
%\)
%
%
We study the following preorders, already considered in~\cite{DBLP:conf/kr/LukasiewiczMM23}:

\begin{enumerate}
    \item \emph{Inclusion} ($\subsetInclusionCriteria$): for every $G_1, G_2 \subseteq G$, $G_2$ is preferred to $G_1$ if $G_1 \subset G_2$. 
    \item \emph{Prioritized Set Inclusion} ($\subseteq_P$): Given a prioritization $P = (P_1,\ldots,P_\ell)$, for every $G_1, G_2 \subseteq G$, we write $G_1 \subseteq_P G_2$ iff for every $1\leq i\leq \ell$, $G_1\cap P_i = G_2\cap P_i$, or there is some $i$ such that $G_1^f\cap P_i \subsetneq G_2^f\cap P_i$ and for every $1\leq j<i$, $G_1^f\cap P_j=G_2^f\cap P_j$.
    \item \emph{Cardinality} ($\cardinalityCritera$): for every $G_1, G_2 \subseteq G$, $G_1 \cardinalityCritera G_2$ iff $|G_1| \leq |G_2|$.
    %\santi{Por que usamos $D_1,D_2$ y $D$ aca abajo en vez de $G_1,G_2$ y $G$?}
    \item \emph{Prioritized cardinality ($\leq_P$)}: For every $G_1, G_2 \subseteq G$, $G_1 \leq_P G_2$ iff for every $1 \leq i\leq \ell$, $|G_1^f \cap P_i| = |G_2^f \cap P_i|$, or there is some $i$ such that $|G_1^f \cap P_i|< |G_2^f \cap P_i|$ and for every $1\leq j<i$, $|G_1^f\cap P_j|=|G_2^f\cap P_j|$.
    \item \emph{Weights} ($\weightCriteria$):
    Given a weight function $\weight \colon \Sigma_e \sqcup \Sigma_n \to \N$ that assigns weights to all elements from $\Sigma_e \sqcup \Sigma_n$, we extend $w$ to any data-graph $G = (V,L,D)$ in the natural way as $w(G) = \sum_{x \in V} w(D(x)) + \sum_{x,y \in V} \left( \sum_{z \in L(x,y)} w(z)\right)$. Then, for every $G_1, G_2 \subseteq G$, $G_1 \leq_w G_2$ iff $\weight(G_1) \leq \weight(G_2)$. We assume that $w(x)$ can be computed in polynomial time on $|x|$, and moreover that $w(x) = O(2^{(|x|)})$. As a consequence, $\weight(G) = O(2^{(|G|)})$.
\end{enumerate}
We consider another preference criteria based on \textit{multisets}. 
\begin{definition}[Multisets]
    Given a set $A$, its set of \defstyle{finite multisets} is defined as $\multiset{A} =\{ M: A \to \N \mid M(x) \neq 0 \text{ only for a finite number of }x \}.$
    Given a partial order\footnote{That is, an antisymmetric, reflexive, and transitive relation} $(A, \le)$, the \defstyle{multiset ordering} $(\multiset{A}, \lessEqualMultiset)$ is defined as in {\em\cite{DershowitzManna,huet1980equations}}: $M_1 \lessEqualMultiset M_2$ iff either $M_1 = M_2$ or $M_1 \neq M_2$ and for all $x \in A$, if $M_1(x) > M_2(x)$, then there exists some $y \in A$ such that $x < y$ and  $M_1(y) < M_2(y)$.
\end{definition}
%
% \begin{example} Over $\multiset{\N}$ we have $\{0,0,0,1,1,1,2\} \lessMultiset \{2,2\}$. \end{example}
If $(A, \le)$ is a partial (resp.\ total) order, then $(\multiset{A}, \lessEqualMultiset)$ is a partial (resp.\ total) order. If $(A, \le)$ is a well-founded order\footnote{I.e. for all $S \subseteq A$, if $S \neq \emptyset$ then there exists $m \in S$ such that $s\not < m$ for every $s \in S$.}, then  we have that $(\multiset{A}, \lessEqualMultiset)$ is also a well-founded order~\cite{DershowitzManna}. We may note the elements of a multiset as pairs consisting of an element $a$ of $A$ and its multiplicity $k$, e.g.\ $(a, k)$.
\begin{definition}
    Given a finite data-graph $G$ over $\Sigma_e$ and $\Sigma_n$, we define its \defstyle{multiset of edges and data values} as the multiset \edgeDataMultiset{G} over $\Sigma_e$ and $\Sigma_n$ such that: 
    \[
    \edgeDataMultiset{G}(x) = \begin{cases}
    |\semantics{x}_\aGraph| & x \in \Sigma_e \\
    |\semantics{x^=}_\aGraph| & x \in \Sigma_n.  
    \end{cases}
    \]
\end{definition}
All these multisets of edges and data values belong to $\multiset{A}$, for $A = \Sigma_e \sqcup \Sigma_n$. The last preference criterion we consider for repairing is:
\begin{enumerate}
    \setcounter{enumi}{5}
    \item \textit{Multiset} ($\multisetCriteria$): Given a partial order $(\Sigma_e \cup \Sigma_n, \le)$, for every $G_1,G_2 \subseteq G$, $G_1 \multisetCriteria G_2$ iff $G_1^\mathcal{M} \leq_{mset} G_2^\mathcal{M}$. We assume that deciding whether $G_1 \preccurlyeq_{\mathcal{M}} G_2$ can be done in $O(poly(|G_1| + |G_2|))$.
\end{enumerate}

\begin{example}
In Example~\ref{figure:movieDB}, if $\varphi$ is the only integrity constraint imposed, by deleting the edge $(\esDato{Robbie}, \esLabel{type}, \esDato{Actor})$ we get a subset repair (that is, where the preorder is given by set inclusion) of the data-graph that is also a cardinality subset repair. If both $\varphi$ and $\psi_1$ are imposed then, to obtain a subset repair, it is necessary to delete the edges $(\esDato{Robbie}, \esLabel{type}, \esDato{Actor})$ and $(\esDato{Actor}, \esLabel{type}, \esDato{Robbie})$. Notice that deleting the node \esDato{Robbie} would also be enough to satisfy the constraints, however this does not yield a maximal consistent subset.

%\santi{Me parece que la ultima oracion es falsa, porque el grafo satisface $\psi_1$. Creo que esto quedo viejo de cuando teniamos las otras reglas (la regla era que todo nodo debe tener un tipo).}
%\nina{Si, en algun momento cambiamos el ejemplo de nuevo, ya que modificamos el grafo justamente para que no satisfaga $\psi_1$ cuando escribimos esa regla. Podemos sumar un eje (Actor,type, Robbie), en ese caso el subset repair es el mismo que con $\varphi$ + borrar ese nuevo eje. En cualquier caso, borrar el nodo Robbie no seria minimal.}

Suppose instead that $\aRestrictionSet = \{ \psi_2, \psi_3 \}$. In that case, the possible subset repairs are obtained by removing one of the outgoing \esLabel{directed\_by} edges from \esDato{The Master}. If we are given a prioritization $(P_1,P_2)$ such that $(\esDato{The Master}, \esLabel{directed\_by}, \esDato{Anderson}) \in P_1$ and $(\esDato{The Master}, \esLabel{directed\_by}, \esDato{Chazelle}) \in P_2$, then any ambiguity is removed and the edge that should be deleted is the one directed to \esDato{Chazelle}.
\end{example}

We now show an alternative running example that represents a physical network by means of a data-graph.

\begin{example}\label{example:WeightsAndRepairs}
Consider the following data-graph representing a physical network, where the edges are labeled according to two different quality levels of connection (e.g. varying robustness, resistance to physical attacks) which we call $\downarrow_\esLabel{low}$ and $\downarrow_\esLabel{high}$. 
Let $\aRestrictionSet = \{\aPath_{conn\_dir}, \aPath_{2l\rightarrow good}, \aPath_{hll\rightarrow rn}\}$ be a set of restrictions such that:
\begin{align*}
 \aPath_{conn\_dir} =& \labelComodin^*   \\
 \aPath_{2l\rightarrow good} =& \downarrow_\esLabel{low} \downarrow_\esLabel{low}
\entoncesCamino \downarrow_\esLabel{high} \downarrow_\esLabel{low} \pathUnion \downarrow_\esLabel{low} \downarrow_\esLabel{high} \\
& \pathUnion \downarrow_\esLabel{high} \downarrow_\esLabel{high} \pathUnion \downarrow_\esLabel{high} \pathUnion \downarrow_\esLabel{low} \\
 %&\aPath_{3l=bad} = ((((\downarrow_\esLabel{Low} \pathIntersection \pathComplement{\epsilon}) . (\downarrow_\esLabel{Low} \pathIntersection \pathComplement{\varepsilon})) \pathIntersection \pathComplement{\varepsilon}). (\downarrow_\esLabel{Low} \pathIntersection \pathComplement{\varepsilon})) \pathIntersection \pathComplement{\varepsilon})
%\pathIntersection (\downarrow_\esLabel{low}.(\downarrow_\esLabel{Low}.\downarrow_\esLabel{Low} \pathIntersection \pathComplement{\varepsilon}))
%\aPath_{hll} = & \downarrow_\esLabel{high} 
  \aPath_{hll\rightarrow rn} =& \downarrow_\esLabel{high} \downarrow_\esLabel{low} \downarrow_\esLabel{low} 
  \entoncesCamino (
  ((\downarrow_\esLabel{high} \pathIntersection \varepsilon) \downarrow_\esLabel{low} \downarrow_\esLabel{low}) \\ 
  \pathUnion& (\downarrow_\esLabel{high} (\downarrow_\esLabel{low} \pathIntersection \varepsilon) \downarrow_\esLabel{low}) 
  \pathUnion  \downarrow_\esLabel{high} \downarrow_\esLabel{low} (\downarrow_\esLabel{low} \pathIntersection \varepsilon))  
  \\
  %%%%
  \pathUnion & ((\downarrow_\esLabel{high} \downarrow_\esLabel{low} \pathIntersection \varepsilon) \downarrow_\esLabel{low})
  \pathUnion (\downarrow_\esLabel{high} (\downarrow_\esLabel{low} \downarrow_\esLabel{low} \pathIntersection \varepsilon)) 
  \\
  \pathUnion & (\downarrow_\esLabel{high} \downarrow_\esLabel{low} \downarrow_\esLabel{low} \pathIntersection \varepsilon)
  )
 %  \aPath_{hll=bad} =& ((((\downarrow_\esLabel{high} \pathIntersection \pathComplement{\varepsilon}) . (\downarrow_\esLabel{Low} \pathIntersection \pathComplement{\varepsilon})) \pathIntersection \pathComplement{\varepsilon}). (\downarrow_\esLabel{Low} \pathIntersection \pathComplement{\varepsilon})) \pathIntersection \pathComplement{\varepsilon}) \pathIntersection  \\
 %  &
 % \pathIntersection (\downarrow_\esLabel{high}.((\downarrow_\esLabel{Low}.\downarrow_\esLabel{Low}) \pathIntersection \pathComplement{\varepsilon})) \entoncesCamino  \varepsilon \pathIntersection \pathComplement{\varepsilon}
%  %\lnot \comparacionCaminos{\downarrow_\esLabel{low} \downarrow_\esLabel{low} \downarrow_\esLabel{low}}.
\end{align*}

$\aPath_{conn\_dir}$ expresses the notion of directed connectivity;  
$\aPath_{2l\rightarrow good}$  
establishes that, if a node can be reached by two low-quality edges, then it is also possible to reach it via a `good' path. That is, it can be reached either via only one edge (high or low), or in two steps but through at least one high-quality edge. 
Finally, $\aPath_{hll\rightarrow rn}$ enforces that, if we have a path $\downarrow_\esLabel{high} \downarrow_\esLabel{low} \downarrow_\esLabel{low}$ between two nodes, then there must exist a $\downarrow_\esLabel{high} \downarrow_\esLabel{low} \downarrow_\esLabel{low}$ path between the same nodes such that it revisits some node along its trajectory. 
%implies that we cannot have connections of length 3 that repeat no nodes and use only low-quality edges.

%Finally $\aPath_{hll=bad}$ means that any paths of the form $\downarrow_\esLabel{high} \downarrow_\esLabel{low} \downarrow_\esLabel{low}$ necessarily must repeat nodes\footnote{The antecedent of the formula defines a $\downarrow_\esLabel{high} \downarrow_\esLabel{low} \downarrow_\esLabel{low}$ path where no single step forms a loop, and also checks that the first node is not the third node, the first node is not the fourth node, and, lastly, that the second node is not the fourth node.}. \sergio{No, no hace esto. Pensar otra path expression y revisar los ejemplos}

We are instructed to make urgent changes to the network, and to do this we are given a weight function representing the cost of dismantling centrals (nodes) and connections in this network. This function assigns uniform weight to all data values $\weight(\aNode) = 20$, a cost for low-quality connections $\weight(\downarrow_\esLabel{low}) = 1$, and another for high-quality connections $\weight(\downarrow_\esLabel{high}) = 3$. %That is, building new high-quality connections is more expensive, but the maintenance is cheaper.
For a data-graph $\aGraph$ that does not satisfy the restrictions, a $\weight$-preferred subset repair can be interpreted as the most cost-effective way of obtaining a subset of the network that satisfies the restrictions while minimizing the dismantling costs given by $\weight$. For a full example, see Figure~\ref{figures:weightExamplePreferredRepair}.
%\vspace{-.25cm}
%\nina{hacemos un resize de las figuras o las reacomodamos como un minipage o algo de eso?}
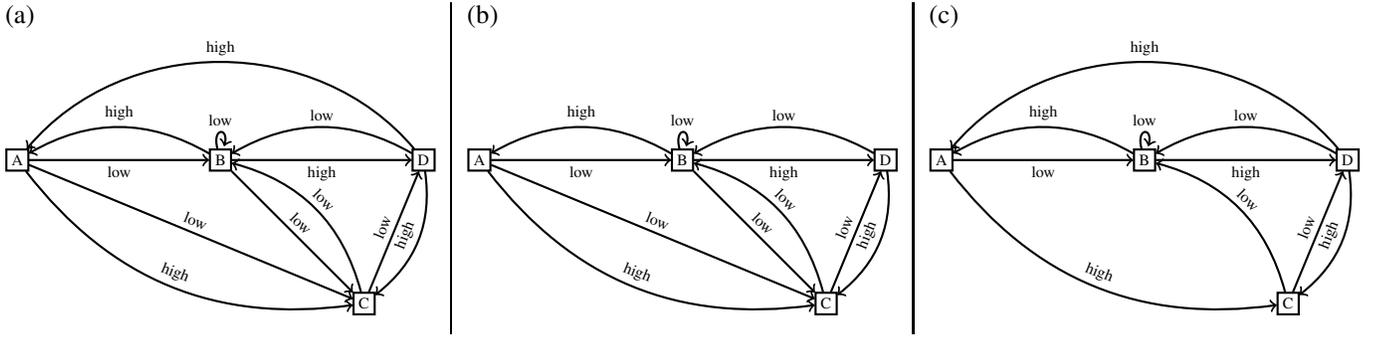
\begin{figure*}[ht]
	\centering
	\begin{tabular}{ l | l | l }
		(a) & (b) & (c) \\
 % \resizebox{7em}{7.5em}{%
        \begin{tikzpicture}%
			[rotate=0,node distance={45mm}, thick, main/.style = {draw, rectangle},  scale=0.6, every node/.append style={transform shape}] 
% 			\node[main] (1) {c}; 
% 			\node[main] (2) [below right of=1] {d}; 
% 			\node[main] (3) [left of=1] {b}; 
% 			\node[main] (4) [right of=1] {e};
% 			\draw[->] (3) -- (1) node[midway, below=0.5pt, sloped]{high};
% 			\draw[->, loop left] (3) edge node[midway, above=0.5pt, sloped]{low} (3); 
% 			\draw[->, bend right = 50] (4) edge node[midway, above=0.5pt, sloped]{low} (3); 
% 			\draw[->] (1) edge node[midway, below=0.5pt, sloped, pos=0.5]{low} (4);
% 			\draw[->, bend right = 30] (4) edge node[midway, above=0.5pt, sloped, pos=0.5]{high} (1); 
% 			\draw[->] (1) -- (2) node[midway, above=0.5pt, sloped, pos=0.5]{high};
% 			\draw[->, bend left = 30] (1) edge node[midway, above=0.5pt, sloped, pos=0.5]{low} (2); 
% 			\draw[->] (2) -- (3) node[midway, above=0.5pt, sloped, pos=0.5]{high};
% %   		\draw[->, bend right = 30] (3) edge node[midway, above=0.5pt, sloped, pos=0.5]{low} (2);
% 			\draw[->, bend right = 30] (1) edge node[midway, above=0.5pt, sloped, pos=0.5]{low} (3); 
% 			\draw[->] (2) -- (4) node[midway, above=0.5pt, sloped, pos=0.5]{low};
% 			\draw[->, bend right = 30] (1)  (3); 
   			\node[main] (1) {B}; 
			\node[main] (2) [below right of=1] {C}; 
			\node[main] (3) [left of=1] {A}; 
			\node[main] (4) [right of=1] {D};
			\draw[->] (3) -- (1) node[midway, below=0.5pt, sloped]{low};
			%\draw[->, loop left, looseness=30] (3) edge node[midway, above=0.5pt, sloped]{low} (3); %B
			\draw[->, bend right = 50] (4) edge node[midway, above=0.5pt, sloped]{high} (3); 
			\draw[->] (1) edge node[midway, below=0.5pt, sloped, pos=0.5]{high} (4);
			\draw[->, bend right = 30] (4) edge node[midway, above=0.5pt, sloped, pos=0.5]{low} (1); 
			\draw[->] (1) -- (2) node[midway, above=0.5pt, sloped, pos=0.5]{low};
			\draw[->, bend right = 30] (2) edge node[midway, above=0.5pt, sloped, pos=0.5]{low} (1); 
			\draw[->, bend right = 30] (3) edge node[midway, above=0.5pt, sloped, pos=0.5]{high} (2);
   			\draw[->] (3) -- (2) node[midway, above=0.5pt, sloped, pos=0.5]{low};
			\draw[->, bend right = 30] (1) edge node[midway, above=0.5pt, sloped, pos=0.5]{high} (3); 
			\draw[->] (2) -- (4) node[midway, above=0.5pt, sloped, pos=0.5]{low};
			\draw[->, bend right = 30] (1)  (3); 
      		\draw[->, bend left = 30] (4) edge node[midway, above=0.5pt, sloped, pos=0.5]{high} (2);
                \draw[->, loop above] (1) edge node[midway, above=0.5pt, sloped, pos=0.5]{low} (1); %new. Había un problema con la condición alpha2l-> good para la figura (b) sino, porque se podia ir a c y luego a d con low, sin mejor manera de hacer el camino

		\end{tikzpicture}
 %       }
        & %GRAFO (B)
		\begin{tikzpicture}%
			[rotate=0,node distance={45mm}, thick, main/.style = {draw, rectangle},  scale=0.6, every node/.append style={transform shape}] 
   			\node[main] (1) {B}; 
			\node[main] (2) [below right of=1] {C}; 
			\node[main] (3) [left of=1] {A}; 
			\node[main] (4) [right of=1] {D};
			\draw[->] (3) -- (1) node[midway, below=0.5pt, sloped]{low};
			%\draw[->, loop left, looseness=30] (3) edge node[midway, above=0.5pt, sloped]{low} (3);  %B
%			\draw[->, bend right = 50] (4) edge node[midway, above=0.5pt, sloped]{high} (3); 
			\draw[->] (1) edge node[midway, below=0.5pt, sloped, pos=0.5]{high} (4);
			\draw[->, bend right = 30] (4) edge node[midway, above=0.5pt, sloped, pos=0.5]{low} (1); 
			\draw[->] (1) -- (2) node[midway, above=0.5pt, sloped, pos=0.5]{low}; %Sacado porque se rompía 2L-> good
		\draw[->, bend right = 30] (2) edge node[midway, above=0.5pt, sloped, pos=0.5]{low} (1);%  Sacado porque se rompía 2L-> good :::: Pero perdemos conexion :(
			\draw[->, bend right = 30] (3) edge node[midway, above=0.5pt, sloped, pos=0.5]{high} (2);
   			\draw[->] (3) -- (2) node[midway, above=0.5pt, sloped, pos=0.5]{low};
			\draw[->, bend right = 30] (1) edge node[midway, above=0.5pt, sloped, pos=0.5]{high} (3); 
			\draw[->] (2) -- (4) node[midway, above=0.5pt, sloped, pos=0.5]{low};
			\draw[->, bend right = 30] (1)  (3); 
   			\draw[->, bend left = 30] (4) edge node[midway, above=0.5pt, sloped, pos=0.5]{high} (2);
                \draw[->, loop above] (1) edge node[midway, above=0.5pt, sloped, pos=0.5]{low} (1);%new
% 			\node[main] (1) {c}; 
% 			\node[main] (2) [below right of=1] {d}; 
% 			\node[main] (3) [left of=1] {b}; 
% 			\node[main] (4) [right of=1] {e};
% 			\draw[->] (3) -- (1) node[midway, below=0.5pt, sloped]{high};
% 			%\draw[->, loop left] (3) edge node[midway, above=0.5pt, sloped]{low} (3); 
% 			\draw[->, bend right = 50] (4) edge node[midway, above=0.5pt, sloped]{low} (3); 
% 			\draw[->] (1) edge node[midway, below=0.5pt, sloped, pos=0.5]{low} (4);
% 			\draw[->, bend right = 30] (4) edge node[midway, above=0.5pt, sloped, pos=0.5]{high} (1); 
% 			\draw[->] (1) -- (2) node[midway, above=0.5pt, sloped, pos=0.5]{high};
% %			\draw[->, bend left = 30] (1) edge node[midway, above=0.5pt, sloped, pos=0.5]{low} (2);
% 			\draw[->] (2) -- (3) node[midway, above=0.5pt, sloped, pos=0.5]{high};
% 			\draw[->] (2) -- (4) node[midway, above=0.5pt, sloped, pos=0.5]{low};
% 			\draw[->, bend right = 30] (1) edge node[midway, above=0.5pt, sloped, pos=0.5]{low} (3); 
		\end{tikzpicture}
        &  %GRAFO (C)
        \begin{tikzpicture}%
			[rotate=0,node distance={45mm}, thick, main/.style = {draw, rectangle},  scale=0.6, every node/.append style={transform shape}] 
      		\node[main] (1) {B}; 
			\node[main] (2) [below right of=1] {C}; 
			\node[main] (3) [left of=1] {A}; 
			\node[main] (4) [right of=1] {D};
			\draw[->] (3) -- (1) node[midway, below=0.5pt, sloped]{low};
			%\draw[->, loop left, looseness=30] (3) edge node[midway, above=0.5pt, sloped]{low} (3); % B
			\draw[->, bend right = 50] (4) edge node[midway, above=0.5pt, sloped]{high} (3); 
			\draw[->] (1) edge node[midway, below=0.5pt, sloped, pos=0.5]{high} (4);
			\draw[->, bend right = 30] (4) edge node[midway, above=0.5pt, sloped, pos=0.5]{low} (1); 
%			\draw[->] (1) -- (2) node[midway, above=0.5pt, sloped, pos=0.5]{low};
			\draw[->, bend right = 30] (2) edge node[midway, above=0.5pt, sloped, pos=0.5]{low} (1); 
			\draw[->, bend right = 30] (3) edge node[midway, above=0.5pt, sloped, pos=0.5]{high} (2);
%   		\draw[->] (3) -- (2) node[midway, above=0.5pt, sloped, pos=0.5]{low};
			\draw[->, bend right = 30] (1) edge node[midway, above=0.5pt, sloped, pos=0.5]{high} (3); 
			\draw[->] (2) -- (4) node[midway, above=0.5pt, sloped, pos=0.5]{low};
			\draw[->, bend right = 30] (1)  (3); 
   			\draw[->, bend left = 30] (4) edge node[midway, above=0.5pt, sloped, pos=0.5]{high} (2);
                \draw[->, loop above] (1) edge node[midway, above=0.5pt, sloped, pos=0.5]{low} (1);%new
% 			\node[main] (1) {c}; 
% 			\node[main] (2) [below right of=1] {d}; 
% 			\node[main] (3) [left of=1] {b}; 
% 			\node[main] (4) [right of=1] {e};
% 			\draw[->] (3) -- (1) node[midway, below=0.5pt, sloped]{high};
% 			\draw[->, loop left] (3) edge node[midway, above=0.5pt, sloped]{low} (3); 
% %			\draw[->, bend right = 50] (4) edge node[midway, above=0.5pt, sloped]{low} (3); 
% 			\draw[->] (1) edge node[midway, below=0.5pt, sloped, pos=0.5]{low} (4);
% 			\draw[->, bend right = 30] (4) edge node[midway, above=0.5pt, sloped, pos=0.5]{high} (1); 
% 			\draw[->] (1) -- (2) node[midway, above=0.5pt, sloped, pos=0.5]{high};
% 			\draw[->, bend left = 30] (1) edge node[midway, above=0.5pt, sloped, pos=0.5]{low} (2);
% 			\draw[->] (2) -- (3) node[midway, above=0.5pt, sloped, pos=0.5]{high};
% 			\draw[->, bend right = 30] (1) edge node[midway, above=0.5pt, sloped, pos=0.5]{low} (3); 
% 			\draw[->] (2) -- (4) node[midway, above=0.5pt, sloped, pos=0.5]{low};
% 			\draw[->, bend right = 30] (1)  (3); 
		\end{tikzpicture}
	\end{tabular}
	\caption{(a) A data-graph that does not satisfy the constraints from Example~\ref{example:WeightsAndRepairs} (for example, $(\esDato{B}, \esDato{D}) \notin \semantics{
 %\aPath_{hll=bad}
 \aPath_{hll\rightarrow rn}
 }$).
    (b) A subset data-graph (which is a subset repair) of the example of figure (a) that satisfies the constraints but is not a $\weight$-preferred subset repair; the associated weight of this repair is $w(G) - 3$ (from one \esLabel{high} edge).
    (c) A $\weight$-preferred subset repair; the associated weight of this repair is $w(G) - 2$ (from two deleted \esLabel{low} edges).
    %subset of the data-graph of figure (a) that satisfies $\aPath_{connected\_dir}$ and $\aPath_{2l\rightarrow good}$ but does not satisfy $\aPath_{3l=bad}$: there is a path of length 3 using only $\esLabel{low}$ edges. 
    }
	\label{figures:weightExamplePreferredRepair}
\end{figure*}
\end{example}
%    \santi{Creo que deberiamos agregar una explicacion de este ejemplo. Entiendo que uno de los pares que no cumplen de a) es $(\esDato{c}, \esDato{b})$. Despues, no entiendo por que hay que borrar el loop en \esDato{b}}
%    \sidenina{la regla que queria poner originalmente era 'no hay caminos L-H-L que repitan nodos' porque eso te da repairs que borran ejes H y otros que no y tenia mas sentido; pero lo cambie 8 veces y quedo mal... Luego lo charlamos}
%
\begin{example}[Cont. Example~\ref{example:WeightsAndRepairs}]
    %Consider the data-graphs depicted in Figure~\ref{figures:weightExamplePreferredRepair}.
    Ignoring all possible data values, the multisets corresponding to graphs (b) and (c) of Figure~\ref{figures:weightExamplePreferredRepair} are (with the informal multiset notation): $\{ (\esLabel{low}, 7), (\esLabel{high},4) \}$ and $\{ (\esLabel{low}, 5), (\esLabel{high},5) \}$. 
    %$\{\esLabel{low}, \esLabel{low}, \esLabel{low}, \esLabel{high},\esLabel{high}, \esLabel{high}, \esLabel{high}\}$ and $\{\esLabel{low}, \esLabel{low}, \esLabel{low}, \esLabel{low}, \esLabel{low}, \esLabel{high}, \esLabel{high},\esLabel{high}, \esLabel{high}\}$, respectively. 
    If we assume $\esLabel{low} < \esLabel{high}$, the data-graph (c) is \multisetCriteria-preferred to (b).
\end{example}

\subsection{Computational problems related to preferred repairs}

When it comes to repairing a database, there are several problems that we can define to obtain a more fine-grained and broad picture of the complexity of finding a consistent instance of the original database. Naturally, these problems can be translated to the data-graph repairing setting. In the sequel we define the repair problems that we will study in this work.
The intricacy of these problems usually lies in the set of allowed expressions $\mathcal{L}$ and $\queryLanguage$ for the constraints and the queries, respectively. Throughout this work, we will assume that these sets are always expressions in \Gregxpath.

Let us consider the \emph{empty graph} defined as $\aGraph = (\emptyset, L, D)$, from now on denoted by $\emptyset$. Since the empty graph satisfies every set of restrictions, every graph $\aGraph$ has a subset repair for any set $\aRestrictionSet$ of restrictions. %although this might not be the case when another (ad hoc) preference criteria is introduced. 
To better understand the complexity of the repairing task, we define the following decision problems:%\sideedwin{no están incompletos estos problemas? Porque si el criterio es Prioritized-set el grafo $G$ debería estar acompañado de un $P$ o si el criterio es weight debería haber también una función $w$. De dónde saldrían estas cosas si no vienen en el input?} \nina{si, lo pense tambien al escribirlo, pero en el paper de molinaro los definen asi: el problema es existencia de repair dado un preorden que viene definido sobre un conjunto S de dbs (que asumo es el espacio de repairs), asi que lo deje igual.}
%\edwin{Ahí estoy viendo, es un poco raro eso que hacen en el artículo porque en las demostraciones cuando dan una instancia la dan junto al $w$ o al $P$, o sea que estos se entienden como parte del input. Incluso en las demostraciones de Hardness construyen un prioritization que depende de la 3cnf. Me parece que tenemos que comentar algo al respecto.}
%
%
%
\begin{center}
\fbox{\begin{minipage}{22em}
  \textsc{Problem}: \problemRepairExistence{\preccurlyeq}

\textsc{Input}: A data-graph $\aGraph$ and a set $\aRestrictionSet \subseteq \mathcal{L}$.

\textsc{Output}: Decide if there exists $H \in \prefRepairsSet{\preccurlyeq}{G}{R}$ such that $H \neq \emptyset$.
\end{minipage}}
\end{center}

\begin{center}
\fbox{\begin{minipage}{22em}
  \textsc{Problem}: \problemRepairChecking{\preccurlyeq}

\textsc{Input}: Data-graphs $\aGraph$ and $\aGraph'$, and a set $\aRestrictionSet \subseteq \mathcal{L}$.

\textsc{Output}: Decide if $\aGraph' \in \prefRepairsSet{\preccurlyeq}{G}{R}$.
\end{minipage}}
\end{center}
We consider the functional version of the repair problem:
\begin{center}
\fbox{\begin{minipage}{22em}
  \textsc{Problem}: \problemRepairComputing{\preccurlyeq}

\textsc{Input}: A data-graph $\aGraph$ and a set $\aRestrictionSet \subseteq \mathcal{L}$.

\textsc{Output}: A data-graph $H \in \prefRepairsSet{\preccurlyeq}{G}{R}$.
\end{minipage}
}
\end{center}

Whenever $\preccurlyeq \in \{\priorityzedInclusionCriteria, \priorityzedCardinalityCriteria\}$ the input also contains the prioritization $(P_1,\ldots,P_\ell)$, which has size $O(|G|)$. For the cases $\preccurlyeq \in \{\weightCriteria, \multisetCriteria\}$ we will assume the function $w$ and the order $<$ to be fixed and polynomial-time computable, as specified in the definitions of the preorders. Considering both given as part of the input (in some reasonable codification) does not change the obtained results, since all upper bounds hold when these elements are fixed, while lower bounds hold under the tractability assumption.

Another well-known problem is CQA, where we want to find the \emph{consistent answers} of $q$ \wrt the database and a set of integrity constraints, that is, the set of all the answers to $q$ that lie in every possible repair of the original instance.
\begin{center}
\fbox{\begin{minipage}{22em}
  \textsc{Problem}: \problemCQAsubset{\preccurlyeq}

\textsc{Input}: A data-graph $\aGraph = (V,L,D)$, a set $\aRestrictionSet \subseteq \mathcal{L}$, a query $q \in \queryLanguage$ and two nodes $v,w \in V$.

\textsc{Output}: Decide whether $(v,w) \in \semantics{q}_H$ for all $H \in \prefRepairsSet{\preccurlyeq}{G} {R}$.
\end{minipage}}
\end{center}
In this paper, we perform a systematic study of the complexity of all the problems related to repairing a data-graph, for the notions of Inclusion, Cardinality, Prioritized set inclusion and Cardinality, Weights, and Multisets.

\paragraph{Complexity classes} Aside from the usual complexity classes \PTIME{}, \NP{} and \coNP{}, in this work we will use the following classes: \deltaptwo{} (i.e.,\ problems solvable in polynomial time using an \NP{} oracle), \deltaptwo{}[$\log n$] (same as before but using $O(\log |x|)$ calls to an \NP{} oracle on input $x$) and \piptwo{} (i.e.,\ the complement of the class of problems solvable in non-deterministic polynomial time with an \NP{} oracle, $\Sigma_p^2$). 

\section{Complexity of preferred repairs}\label{sec:repairs}

We start with the following lemma, which relates the subset inclusion criterion to all other preference criteria. Throughout this work, when we say $\preccurlyeq$ is {\em any preference criteria}, we mean any preference criteria from \S\ref{sec:types-preferences}

\begin{lemma}\label{obs:all_pref_rep_are_subset}
    Let $G$ be a data-graph, $R \subseteq \Gregxpath$, and $\preccurlyeq$ any preference criteria. Then, $\prefRepairsSet{\preccurlyeq}{G}{R} \subseteq \prefRepairsSet{\subsetInclusionCriteria}{G}{R}$.
\end{lemma}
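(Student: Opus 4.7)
The plan is to prove the contrapositive: assume $G' \subseteq G$ is consistent but is not a $\subseteq$-repair, and show that $G'$ then fails to be a $\preccurlyeq$-repair for any of the six preorders introduced in Section~\ref{sec:types-preferences}. By assumption, there exists $G'' \neq G'$ with $G'' \models R$ and $G' \subsetneq G'' \subseteq G$. I will show that this very $G''$ witnesses the failure of $G'$ as a $\preccurlyeq$-repair, i.e.\ that $G' \prec G'' \preccurlyeq G$. Hence, the whole proof reduces to the following uniform claim: for each criterion $\preccurlyeq$ listed, $G_1 \subseteq G_2$ implies $G_1 \preccurlyeq G_2$, and $G_1 \subsetneq G_2$ implies $G_1 \prec G_2$.

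I would then verify this claim by a short case analysis. For \textsc{Inclusion} it is the definition. For \textsc{Cardinality}, a proper subset strictly decreases $|\cdot|$. For \textsc{Prioritized Set Inclusion}, since $G_1 \cap P_j \subseteq G_2 \cap P_j$ for every $j$, taking the least $i$ where the two sides differ gives $G_1 \cap P_i \subsetneq G_2 \cap P_i$ while agreeing on all $j<i$; this yields $G_1 \prec_{\subseteq_P} G_2$. \textsc{Prioritized Cardinality} is analogous, replacing sets by their sizes. For \textsc{Weights}, every extra fact contributes a positive value to $w$, so $G_1 \subsetneq G_2$ gives $w(G_1) < w(G_2)$. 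For \textsc{Multiset}, $G_1 \subsetneq G_2$ implies $G_1^{\mathcal{M}}(x) \leq G_2^{\mathcal{M}}(x)$ for every $x \in \Sigma_e \sqcup \Sigma_n$, with strict inequality for at least one $x$; hence there is \emph{no} $x$ with $G_1^{\mathcal{M}}(x) > G_2^{\mathcal{M}}(x)$, so the Dershowitz--Manna condition is vacuously satisfied and $G_1^{\mathcal{M}} <_{mset} G_2^{\mathcal{M}}$.

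None of the individual cases is deep; the main thing to get right is the multiset case, where the strict inequality follows not from producing a dominating $y$ but from the fact that the defining implication has an empty hypothesis. The only mildly delicate point is the weights criterion, which relies on $w$ taking strictly positive values (so that every added fact really increases the total weight); I would note this explicitly, appealing to the convention of $\mathbb{N}$ as the positive integers used throughout the paper. With the case analysis in place, applying the claim twice (to the pairs $(G',G'')$ and $(G'',G)$) produces the chain $G' \prec G'' \preccurlyeq G$, contradicting $G' \in \prefRepairsSet{\preccurlyeq}{G}{R}$ and closing the argument.
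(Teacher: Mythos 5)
Your proposal is correct and follows essentially the same route as the paper: both reduce the lemma to the claim that a proper subset is strictly less preferred under every criterion and then verify this by the same case analysis (including the appeal to positive weights for $\leq_w$). Your treatment of the multiset case via the vacuously satisfied Dershowitz--Manna condition is, if anything, slightly more carefully spelled out than the paper's.
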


\begin{proof}
    All preference criteria require maximality with respect to set inclusion:
    \begin{itemize}
        \item $\preccurlyeq\; =\; \cardinalityCritera$: if $H \subset H'$ then $|H| < |H'|$, and thus $H < H'$.
        \item $\preccurlyeq \; = \; \priorityzedInclusionCriteria$: if $H \subset H'$ then $H \cap P_i \subseteq H' \cap P_i$ for all sets $P_i$ and $H \cap P_j \subset H' \cap P_j$ for some $j$. Consequently $H \subset_P H'$.
        \item $\preccurlyeq = \weightCriteria$: this follows analogously as for $\cardinalityCritera$, using that $\weight$ assigns positive weights.
        \item $\preccurlyeq = \priorityzedCardinalityCriteria$: follows analogously as for $\priorityzedInclusionCriteria$.%
        \item $\preccurlyeq = \multisetCriteria$: if $H \subset H'$, then $H^{\mathcal{M}}(x) \leq H^{\mathcal{M}}(x)$ for all $x \in \Sigma_e \cup \Sigma_n$ and $H^{\mathcal{M}}(y) < H^{\mathcal{M}}(y)$ for some $y$. Thus, $H <_{\mathcal{M}} H'$. 
    \end{itemize}%
    \vspace{-0.4cm}
\end{proof}
We recall the following known result involving subset inclusion repairs for a set of restrictions that contains only node expressions from the positive fragment of $\Gregxpath$:

\begin{lemma}[\cite{abriola2023complexity}, Theorem 15]\label{lemma:unique_subset_repair_node_pos}
    Let $G$ be a data-graph and $R \subseteq \GNodePosXPath$. Then, $G$ has a unique $\prefrepair{\subsetInclusionCriteria}$ with respect to $R$, and it can be found in polynomial time.
\end{lemma}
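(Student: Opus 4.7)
The plan rests on a monotonicity property of the positive fragment: if $H_1 \subseteq H_2$ are data-graphs, then for every positive path expression $\alpha$, $\semantics{\alpha}_{H_1} \subseteq \semantics{\alpha}_{H_2}$, and for every positive node expression $\varphi$ with $v \in V_{H_1}$, $v \in \semantics{\varphi}_{H_1}$ implies $v \in \semantics{\varphi}_{H_2}$. I would prove both statements simultaneously by mutual induction on the structure of expressions. The base cases ($\varepsilon$, $\labelComodin$, edge labels and their inverses, $\esDatoIgual{c}$, $\esDatoDistinto{c}$) are immediate. The inductive cases for the remaining constructors---composition, union, intersection, Kleene star, bounded iteration, node tests $[\varphi]$, conjunction, disjunction, and the three variants $\comparacionCaminos{\alpha}$, $\comparacionCaminos{\alpha = \beta}$, $\comparacionCaminos{\alpha \neq \beta}$---all preserve monotonicity because none of them involve negation or path complement, which are precisely the productions excluded from $\Gposregxpath$.

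Given monotonicity, I would compute the repair via a greedy fixed-point iteration. Set $G_0 = G$, and given $G_i$ let $G_{i+1}$ be the subgraph of $G_i$ induced by $V_{i+1} = \{v \in V_{G_i} : v \in \semantics{\varphi}_{G_i} \text{ for every } \varphi \in R\}$, keeping all edges of $G$ between surviving nodes. Each non-trivial step strictly removes at least one node, so the sequence stabilizes at some $G^*$ after at most $|V_G|$ rounds; at the fixed point every surviving node satisfies every constraint in $G^*$, hence $G^* \models R$.

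To show that $G^*$ is the unique $\prefrepair{\subsetInclusionCriteria}$, take any consistent $H \subseteq G$ and prove by induction on $i$ that $H \subseteq G_i$. If $v \in V_H$ were discarded at step $i+1$, some $\varphi \in R$ would fail at $v$ in $G_i$; but $H \subseteq G_i$ by the inductive hypothesis and $v \in \semantics{\varphi}_H$ because $H \models R$, so monotonicity forces $v \in \semantics{\varphi}_{G_i}$, a contradiction. Hence $V_H \subseteq V_{G^*}$, and since $G^*$ retains every edge of $G$ between its surviving nodes, $H \subseteq G^*$. Maximality of consistent $H$ then forces $H = G^*$, giving uniqueness. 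The polynomial-time bound is immediate: at most $|V_G|$ rounds, each evaluating the finitely many expressions of $R$ at every remaining node in polynomial time by Lemma~\ref{lemma:evaluate_consistency_polynomial}. I expect the main technical obstacle to be setting up the mutual monotonicity induction cleanly, particularly for the data-comparison path tests $\comparacionCaminos{\alpha = \beta}$ and $\comparacionCaminos{\alpha \neq \beta}$, where one must track both the growth of the path semantics and the persistence of data values.
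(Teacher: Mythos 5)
Your argument is correct. Note, however, that the paper does not actually prove this statement: it is imported verbatim from \cite{abriola2023complexity} (Theorem 15), so there is no in-paper proof to compare against. Your route---mutual monotonicity induction for the positive fragment (sound because subset data-graphs preserve data values, which handles $\esDatoDistinto{\aData}$, and every remaining constructor is monotone), followed by a greedy node-deletion fixed point whose limit contains every consistent subgraph---is the standard argument for this kind of result and establishes uniqueness, maximality, and the polynomial bound without gaps. The only point worth making explicit in a full write-up is the one you already flag implicitly: the induction step $H \subseteq G_i \Rightarrow H \subseteq G_{i+1}$ needs that $G_{i+1}$ is the \emph{induced} subgraph on the surviving nodes, so that $V_H \subseteq V_{i+1}$ alone suffices to conclude containment of $H$ as a data-graph.
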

Merging these two facts, we obtain the following corollary:
\begin{corollary}\label{coro:unique_repair_node_pos}
    Let $G$ be a data-graph, $R \subseteq \GNodePosXPath$ and $\preccurlyeq$ any preference criteria. Then, $G$ has a unique $\prefrepair{\preccurlyeq}$ with respect to $R$, which can be found in polynomial time.\looseness=-1
\end{corollary}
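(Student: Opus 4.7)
The proof is almost immediate by combining Lemma~\ref{obs:all_pref_rep_are_subset} and Lemma~\ref{lemma:unique_subset_repair_node_pos}. The plan is to show that the unique $\subseteq$-repair delivered by Lemma~\ref{lemma:unique_subset_repair_node_pos} is also the unique $\preccurlyeq$-preferred repair, for every preorder $\preccurlyeq$ from Section~\ref{sec:types-preferences}.

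First, I would apply Lemma~\ref{lemma:unique_subset_repair_node_pos} to obtain the unique $\prefrepair{\subsetInclusionCriteria}$ of $G$ with respect to $R$, which I will call $H^\ast$; this is computable in polynomial time. Second, I would invoke Lemma~\ref{obs:all_pref_rep_are_subset}: any $\prefrepair{\preccurlyeq}$ of $G$ is necessarily a $\prefrepair{\subsetInclusionCriteria}$, so if some $\preccurlyeq$-preferred repair exists it must equal $H^\ast$. This already proves \emph{uniqueness}.

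For \emph{existence}, I would verify that $H^\ast$ itself is a $\prefrepair{\preccurlyeq}$. Consistency and the inclusion $H^\ast \subseteq G$ come directly from Lemma~\ref{lemma:unique_subset_repair_node_pos}. For the maximality condition, suppose for the sake of contradiction that there is some consistent $H'' \subseteq G$ with $H^\ast \prec H''$. Since $H^\ast$ is the \emph{unique} $\subseteq$-repair of $G$, any consistent subset of $G$ must be contained in $H^\ast$ (otherwise it could be extended to a $\subseteq$-repair distinct from $H^\ast$, contradicting uniqueness); in particular $H'' \subseteq H^\ast$. But then the very computations in the proof of Lemma~\ref{obs:all_pref_rep_are_subset} show that $H'' \subseteq H^\ast$ implies $H'' \preccurlyeq H^\ast$, and if moreover $H'' \neq H^\ast$ (i.e.\ $H'' \subsetneq H^\ast$) then $H'' \prec H^\ast$, contradicting $H^\ast \prec H''$. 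Hence $H^\ast$ is $\preccurlyeq$-maximal among consistent subsets of $G$.

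There is no real obstacle here; the statement is essentially a corollary in the literal sense. The only mild subtlety I would make explicit is that, to conclude existence, one needs to observe that the arguments bullet-by-bullet in Lemma~\ref{obs:all_pref_rep_are_subset} actually establish the stronger implication ``$H'' \subsetneq H^\ast \Rightarrow H'' \prec H^\ast$'' for each of the six preorders, not merely ``$\preccurlyeq$-preferred $\Rightarrow$ $\subseteq$-preferred''. The polynomial-time bound on computing $H^\ast$ is inherited directly from Lemma~\ref{lemma:unique_subset_repair_node_pos}.
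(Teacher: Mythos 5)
Your proof is correct and follows essentially the same route as the paper: it combines Lemma~\ref{lemma:unique_subset_repair_node_pos} (unique $\subseteq$-repair, computable in polynomial time) with Lemma~\ref{obs:all_pref_rep_are_subset} (every $\preccurlyeq$-repair is a $\subseteq$-repair) to conclude that the $\preccurlyeq$-repair must coincide with $H^\ast$. The only difference is that you verify explicitly that $H^\ast$ is $\preccurlyeq$-maximal among consistent subsets, whereas the paper settles existence by observing that $\prefRepairsSet{\preccurlyeq}{G}{R}$ cannot be empty; both arguments are sound.
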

\begin{proof}
    By Lemma~\ref{lemma:unique_subset_repair_node_pos}, $G$ has a unique $\prefrepair{\subsetInclusionCriteria}$, which can be found in polynomial time. By Lemma~\ref{obs:all_pref_rep_are_subset} it follows that this unique $\prefrepair{\subsetInclusionCriteria}$ must also be $\prefrepair{\preccurlyeq}$: otherwise $\prefRepairsSet{\preccurlyeq}{G}{R}$ would be empty, which is absurd.
\end{proof}
We also establish expressivity relations between some preference criteria.
\begin{observation}\label{obs:trivial_relations_of_preferences}
    Let $G$ be a data-graph and $R \subseteq \Gregxpath$. Then, it holds that:
    \begin{itemize}
        \item $\prefRepairsSet{\subsetInclusionCriteria}{G}{R} = \prefRepairsSet{\priorityzedInclusionCriteria}{G}{R}$ when considering the trivial prioritization $P = \{G\}$.
        \item $\prefRepairsSet{\cardinalityCritera}{G}{R} = \prefRepairsSet{\weightCriteria}{G}{R} = \prefRepairsSet{\priorityzedCardinalityCriteria}{G}{R}$ when considering the trivial weight function $w(x) = 1$ and prioritization $P = \{G\}$.
    \end{itemize}
    As a consequence, any lower bound for the $\subsetInclusionCriteria$ criteria extends to the $\priorityzedInclusionCriteria$ criteria, while upper bounds for the latter one extend to the former. The same holds for $\cardinalityCritera$ with respect to both $\weightCriteria$ and $\priorityzedCardinalityCriteria$.\looseness=-1
\end{observation}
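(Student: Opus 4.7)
The plan is to verify both equalities directly by showing that, under the specified trivial choices, the preorders $\priorityzedInclusionCriteria$, $\weightCriteria$, and $\priorityzedCardinalityCriteria$ collapse to $\subsetInclusionCriteria$ or $\cardinalityCritera$ when restricted to subsets of $G$. Since the notion of prioritized subset repair depends on the preorder only through its induced strict ordering and its restriction to subsets of $G$, once I show the preorders coincide on $\{H : H \subseteq G\}$, the corresponding repair sets are immediately equal.

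For the first bullet, I would take the single-level prioritization $P = (P_1)$ with $P_1 = V_G \cup E_G$. For any $G_1, G_2 \subseteq G$ one has $G_1^f \cap P_1 = G_1^f$ and $G_2^f \cap P_1 = G_2^f$, so the condition ``$G_1^f \cap P_i = G_2^f \cap P_i$ for all $i$, or there is some $i$ with $G_1^f \cap P_i \subsetneq G_2^f \cap P_i$ and equality for all $j < i$'' (with the second clause holding trivially, since no $j<1$ exists) reduces to $G_1 = G_2$ or $G_1 \subsetneq G_2$, which is exactly $G_1 \subseteq G_2$. Hence $\priorityzedInclusionCriteria$ and $\subsetInclusionCriteria$ agree on subsets of $G$, so the two repair sets coincide.

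For the second bullet, the same single-level prioritization combined with the reasoning above (replacing set inclusion by comparison of cardinalities of intersections) shows that $G_1 \priorityzedCardinalityCriteria G_2$ iff $|G_1| \leq |G_2|$, i.e.\ coincides with $\cardinalityCritera$. For $\weightCriteria$, I will use that when $\weight \equiv 1$ the extended weight satisfies $\weight(H) = |V_H| + |E_H| = |H|$ for every subset $H \subseteq G$, so $G_1 \weightCriteria G_2$ iff $|G_1| \leq |G_2|$ as well. This gives the triple equality of preorders on subsets of $G$, and therefore of the induced repair sets.

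There is no real obstacle here; the statement is essentially a sanity check that the prioritized/weighted criteria are conservative extensions of their non-prioritized counterparts, and the proof is a direct unfolding of definitions. The only point that deserves a brief mention is that the repair definition involves both $\preccurlyeq$ and its strict part $\prec$, so I will remark that equal preorders induce equal strict parts, ensuring the maximality clause transfers correctly between the two formulations.
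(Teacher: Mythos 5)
Your proof is correct: the paper states this as an observation without an explicit proof, and your direct unfolding of the definitions (single-level prioritization collapsing $\subseteq_P$ to $\subseteq$ and $\leq_P$ to $\leq$, and $\weight\equiv 1$ giving $\weight(H)=|H|$) is exactly the intended argument. Your closing remark that equal preorders induce equal strict parts, so the maximality clause in the repair definition transfers, is a sensible bit of care.
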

Finally, the following lemma will be useful to extend lower and upper bounds:
\begin{lemma}\label{lemma:existence_reduces_to_repair_checking}
    The problem \problemRepairExistence{\preccurlyeq} reduces to the complement of \problemRepairChecking{\preccurlyeq}. Moreover, the problem \problemRepairExistence{\subsetInclusionCriteria} reduces to \problemRepairExistence{\preccurlyeq} for all preference criteria $\preccurlyeq$.
    
    Therefore, a lower bound for \problemRepairExistence{\subsetInclusionCriteria} implies a lower bound for all \problemRepairExistence{\preccurlyeq} and \problemRepairChecking{\preccurlyeq} problems considering any $\preccurlyeq$.
\end{lemma}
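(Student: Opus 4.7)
The plan is to establish both parts through explicit reductions. For the first reduction, given an input $(G, R)$ for $\problemRepairExistence{\preccurlyeq}$, I map it to the instance $(G, \emptyset, R)$ and show that a non-empty $\preccurlyeq$-preferred repair of $G$ exists if and only if the empty data-graph is \emph{not} a $\preccurlyeq$-preferred repair of $G$, which is exactly the positive output of the complement of $\problemRepairChecking{\preccurlyeq}$ on that instance. The argument hinges on the auxiliary claim that for every non-empty consistent subgraph $H \subseteq G$ one has $\emptyset \prec H \preccurlyeq G$ under each of the six preorders from Section~\ref{sec:types-preferences}. I would verify this case by case: the routine cases (inclusion, cardinality, weights with positive weights) are immediate, and the only slightly more delicate ones are prioritized inclusion $\subseteq_P$ (pick the least priority level at which $H$ and $\emptyset$ differ) and the multiset criterion $\multisetCriteria$ (unfolding $\lessEqualMultiset$ to note that $\emptyset \lessEqualMultiset M$ always holds while $M \lessEqualMultiset \emptyset$ forces $M = \emptyset$).

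Given the claim, if $\emptyset$ is a $\preccurlyeq$-preferred repair then by the third bullet of the repair definition no consistent $H$ can satisfy $\emptyset \prec H$, so by the claim no non-empty consistent subgraph exists, and hence no non-empty preferred repair. Conversely, if $\emptyset$ is not a preferred repair, there is a consistent $H_0$ with $\emptyset \prec H_0 \preccurlyeq G$, and the finite non-empty set of consistent subgraphs strictly above $\emptyset$ has a $\prec$-maximal element $H^*$. A short transitivity argument shows $H^*$ is a $\preccurlyeq$-preferred repair: any alleged witness $H'$ to its non-maximality as a repair would satisfy $\emptyset \prec H^* \prec H'$ with $H'$ consistent, hence $\emptyset \prec H'$ by transitivity of $\preccurlyeq$, contradicting maximality of $H^*$ in the set. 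By the auxiliary claim, $H^* \neq \emptyset$.

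For the second part I would use the identity reduction from $\problemRepairExistence{\subsetInclusionCriteria}$ to $\problemRepairExistence{\preccurlyeq}$. The forward direction invokes the auxiliary claim once more: a non-empty $\subsetInclusionCriteria$-preferred repair $H$ is in particular a non-empty consistent subgraph, so by the backward direction of the argument above a non-empty $\preccurlyeq$-preferred repair exists. The converse is immediate from Lemma~\ref{obs:all_pref_rep_are_subset}, which places every non-empty $\preccurlyeq$-preferred repair into $\prefRepairsSet{\subsetInclusionCriteria}{G}{R}$. The closing ``therefore'' statement is then mechanical: hardness of $\problemRepairExistence{\subsetInclusionCriteria}$ transfers to $\problemRepairExistence{\preccurlyeq}$ via the identity reduction, and to $\problemRepairChecking{\preccurlyeq}$ via the first reduction, relying on the fact that the complexity classes relevant in the paper are closed under complement where required.

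The principal obstacle is the uniform verification of the claim $\emptyset \prec H \preccurlyeq G$ across the six preorders from Section~\ref{sec:types-preferences}; the remaining work is bookkeeping around the third condition in the definition of preferred repair together with transitivity of $\preccurlyeq$ and finiteness of the subgraph lattice of $G$.
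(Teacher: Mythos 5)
Your proof is correct and takes essentially the same route as the paper's: the same instance mapping $(G,R)\mapsto(G,\emptyset,R)$ for the reduction to the complement of \problemRepairChecking{\preccurlyeq}, and the identity reduction (with the converse direction supplied by Lemma~\ref{obs:all_pref_rep_are_subset}) for the second part. You simply make explicit the auxiliary claim $\emptyset \prec H \preccurlyeq G$ for non-empty consistent $H$ and the finiteness/maximality argument that the paper's one-line proof leaves implicit.
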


\begin{proof}
    For the first part, note that $(G,R)$ is a positive instance of \problemRepairChecking{\preccurlyeq} if and only if $(G, \emptyset, R)$ is a negative instance of \problemRepairChecking{\preccurlyeq}.

    For the second statement, if $G$ has a non-empty \prefrepair{\subsetInclusionCriteria} $H$ with respect to $R$, then it must have some \prefrepair{\preccurlyeq} $H'$ such that $H \preccurlyeq H' \preccurlyeq G$. Finally, observe that for every proposed criteria it holds that, if $H \preccurlyeq H'$ and $H \neq \emptyset$, then $H' \neq \emptyset$.
    %\sideedwin{Esto se puede demostrar pero creo que no es cierto que $H\subseteq H'$ o $|H| \leq |H'|$.}
    %\santi{Tenes razon, hay que usar un argumento distinto para cada uno. Lo que si es cierto es que si $\emptyset$ no es un \prefrepair{\subsetInclusionCriteria}. Ahi lo arregle entonces tampoco es un \prefrepair{\preccurlyeq}, y con eso alcanza. Si te parece bien borra los comments.}
\end{proof}

\subsection{Upper bounds / Membership}

Let $\mathcal{L}$ be the language for the set of restrictions $R$. If $\mathcal{L} = \GNodePosXPath$, it follows from Corollary~\ref{coro:unique_repair_node_pos} that for each pair $(G,R)$ its unique \prefrepair{\preccurlyeq} can be computed in polynomial time, and consequently we can solve all the repair related problems.

\begin{proposition}\label{prop:all_repair_problems_ptime_node_pos}
    The problems \problemRepairExistence{\preccurlyeq}, \problemRepairChecking{\preccurlyeq} and \problemRepairComputing{\preccurlyeq} are in \PTIME{} for any preference criteria $\preccurlyeq$ and $\mathcal{L} \subseteq \GNodePosXPath$.
\end{proposition}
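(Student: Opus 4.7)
The plan is to leverage Corollary~\ref{coro:unique_repair_node_pos} directly: since for any input $(G,R)$ with $R \subseteq \GNodePosXPath$ we can compute in polynomial time the unique \prefrepair{\preccurlyeq} $H^*$ of $G$ with respect to $R$, all three decision/search problems reduce to computing $H^*$ and performing a trivial post-processing step.

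For \problemRepairComputing{\preccurlyeq}, the algorithm is immediate: compute $H^*$ using the polynomial-time procedure of Corollary~\ref{coro:unique_repair_node_pos} and return it. For \problemRepairExistence{\preccurlyeq}, I would compute $H^*$ and simply check whether $H^* \neq \emptyset$; by uniqueness this correctly answers whether there exists some non-empty \prefrepair{\preccurlyeq}. For \problemRepairChecking{\preccurlyeq}, given an additional candidate $\aGraph'$, I would first compute $H^*$ and then return ``yes'' if and only if $\aGraph' = H^*$ (an equality check on data-graphs, which can clearly be performed in polynomial time by comparing node sets, edge labels, and the data function component-wise).

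Correctness in each case follows from uniqueness: by Corollary~\ref{coro:unique_repair_node_pos}, the set $\prefRepairsSet{\preccurlyeq}{G}{R}$ is a singleton $\{H^*\}$, so membership in it coincides with equality to $H^*$, and existence of a non-empty \prefrepair{\preccurlyeq} coincides with $H^* \neq \emptyset$. Polynomiality in each case is a composition of two polynomial-time steps: computing $H^*$ and then performing a constant-time or polynomial-time check.

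There is essentially no obstacle here, as the heavy lifting was already done in Corollary~\ref{coro:unique_repair_node_pos} (which itself relies on Lemma~\ref{lemma:unique_subset_repair_node_pos} for subset inclusion and Lemma~\ref{obs:all_pref_rep_are_subset} to propagate to arbitrary $\preccurlyeq$). The only point deserving a brief remark is that for the criteria $\priorityzedInclusionCriteria$ and $\priorityzedCardinalityCriteria$, the input additionally carries a prioritization of size $O(|G|)$, but since it is not used by the algorithm (we simply compute the unique subset repair), this causes no difficulty.
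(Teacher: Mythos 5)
Your proposal is correct and matches the paper's argument exactly: the paper also derives all three problems directly from Corollary~\ref{coro:unique_repair_node_pos} by computing the unique \prefrepair{\preccurlyeq} in polynomial time and then performing the obvious emptiness or equality check. Your write-up merely spells out the trivial post-processing steps that the paper leaves implicit.
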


We obtain \coNP{} as an upper bound for \problemRepairChecking{\preccurlyeq} under arbitrary \Gregxpath constraints.

\begin{proposition}
    The problem \problemRepairChecking{\preccurlyeq} is in \coNP{}, for any preference criteria $\preccurlyeq$ and $\mathcal{L} \subseteq \Gregxpath$.
\end{proposition}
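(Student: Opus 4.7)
The plan is to establish \coNP{} membership by giving a nondeterministic polynomial-time algorithm for the complement problem: deciding whether $G' \notin \prefRepairsSet{\preccurlyeq}{G}{R}$. Unfolding the definition, $G'$ fails to be a \prefrepair{\preccurlyeq} precisely when one of the following holds: (i) $G' \not\subseteq G$; (ii) $G' \not\models R$; or (iii) there exists a data-graph $G'' \neq G'$ with $G'' \subseteq G$, $G'' \models R$, and $G' \prec G'' \preccurlyeq G$. Conditions (i) and (ii) are polynomial-time checkable by a direct syntactic comparison and by Lemma~\ref{lemma:evaluate_consistency_polynomial} respectively, so the nondeterministic part only needs to handle (iii).

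The guess-and-check procedure would be: nondeterministically guess a subgraph $G'' \subseteq G$ (which has size $O(|G|)$, so the guess is polynomial); then verify in polynomial time that $G'' \neq G'$, that $G'' \models R$ (again by Lemma~\ref{lemma:evaluate_consistency_polynomial}), and that $G' \prec G'' \preccurlyeq G$. This last verification reduces, by unfolding $\prec$, to checking $G' \preccurlyeq G''$, $G'' \not\preccurlyeq G'$, and $G'' \preccurlyeq G$, so it suffices to argue that a single comparison $H_1 \preccurlyeq H_2$ between two subgraphs of $G$ is polynomial-time decidable for each criterion from Section~\ref{sec:types-preferences}.

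This is immediate for $\subsetInclusionCriteria$, $\cardinalityCritera$, $\priorityzedInclusionCriteria$, and $\priorityzedCardinalityCriteria$, which only require set inclusion tests and integer cardinality comparisons on polynomially many priority levels. For $\weightCriteria$, the weight $w(H)$ can be computed in polynomial time in $|H|$ under the stated assumption $w(x) = O(2^{|x|})$, since the weights have polynomial-size binary representations and the sum stays polynomial-size; comparing two such integers is then polynomial. For $\multisetCriteria$, the assumption in Section~\ref{sec:types-preferences} explicitly states that $H_1 \multisetCriteria H_2$ is decidable in time $O(\mathrm{poly}(|H_1| + |H_2|))$, so this case is handled directly.

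The only subtle point, and arguably the main thing to state carefully, is the handling of the $\weightCriteria$ criterion: weights may be exponential in value but are polynomial in bit-length, so the comparison remains efficient. Everything else is routine. Combining the observations, the complement of \problemRepairChecking{\preccurlyeq} admits a nondeterministic polynomial-time decision procedure, which places \problemRepairChecking{\preccurlyeq} in \coNP{} uniformly for every preference criterion $\preccurlyeq$ from Section~\ref{sec:types-preferences} and any $\mathcal{L} \subseteq \Gregxpath$.
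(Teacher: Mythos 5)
Your proof is correct and follows essentially the same route as the paper's: both show the complement is in \NP{} by nondeterministically guessing a consistent data-graph $G''$ with $G' \prec G'' \preccurlyeq G$ and verifying the conditions in polynomial time via Lemma~\ref{lemma:evaluate_consistency_polynomial} and the tractability assumptions on the preference criteria. Your version merely spells out the trivially checkable failure modes ($G' \not\subseteq G$, $G' \not\models R$) and the per-criterion comparison costs in more detail than the paper does.
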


\begin{proof}
    $(G, G', R)$ is a negative instance of \problemRepairChecking{\preccurlyeq} if there exists a data-graph $G''$ such that $G' \prec G'' \preccurlyeq G$ and $G'' \models R$. Comparing two data-graphs according to $\preccurlyeq$ can be done in polynomial time by our assumptions over the preference criteria, and testing if $G''$ is consistent with respect to $R$ is in \PTIME due to Lemma~\ref{lemma:evaluate_consistency_polynomial}.
\end{proof}

By Lemma~\ref{lemma:existence_reduces_to_repair_checking} and this last result, we have an \NP{} upper bound for \problemRepairExistence{\preccurlyeq} for any subset of $\Gregxpath$.

\subsection{Lower bounds}

We start this section considering two known lower bound complexity results for \problemRepairExistence{\subseteq}:

\begin{proposition}[\cite{abriola2023complexity}, Theorem 11]\label{prop:repair_existence_hard_subset_path_pos}
    There exists a set of constraints $R \subseteq \Gposregxpath$ such that \problemRepairExistence{\subseteq} is \NPhard{}.
\end{proposition}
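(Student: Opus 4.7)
Since this result is cited from~\cite{abriola2023complexity}, the plan is to recall the standard template for NP-hardness proofs of repair-existence problems and adapt it to positive GXPath. I would reduce from a classical NP-hard problem such as 3SAT, or preferably monotone 3SAT, since the absence of Boolean negation in the source problem is a good stylistic fit for the positive fragment (no path complement, no $\lnot$ on node expressions). Given an input formula $\aFormula$ with variables $x_1,\ldots,x_n$ and clauses $C_1,\ldots,C_m$, I would construct in polynomial time a data-graph $G_\aFormula$ and a set of constraints $R \subseteq \Gposregxpath$ such that $G_\aFormula$ admits a non-empty subset repair with respect to $R$ iff $\aFormula$ is satisfiable.

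First, I would design three gadgets. A \emph{variable gadget} for each $x_i$ consists of a node $v_i$ with two alternative outgoing edges pointing to truth-witness nodes $t_i$ and $f_i$, distinguished by data values. A \emph{clause gadget} for each $C_j$ links a clause node $c_j$ to the witness nodes corresponding to the literals of $C_j$ via labeled edges. A \emph{backbone} of tagged nodes (marked with specific data values using $\esDatoIgual{\cdot}$) ties the gadgets together so constraints can refer to them selectively. The set $R$ would combine: (i) node constraints of the form $\comparacionCaminos{\aPath}$ that force every surviving node to witness a particular outgoing path — used so that keeping any backbone node forces keeping specific witnesses down the line; (ii) data-comparison constraints $\comparacionCaminos{\aPath = \aPathb}$ and $\comparacionCaminos{\aPath \neq \aPathb}$ to force exclusivity between the $t_i$ and $f_i$ branches of a variable gadget (at most one can survive without creating a conflicting witness pair); and (iii) disjunctive path expressions inside $\comparacionCaminos{\cdot}$ to express that a clause node has at least one surviving literal-witness.

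For correctness, if $\aFormula$ is satisfiable, I would exhibit an explicit non-empty subset obtained by keeping exactly one witness per variable (according to the assignment) together with all clause nodes, and verify it is consistent and therefore contains a non-empty subset repair. Conversely, I would argue that any non-empty consistent subset of $G_\aFormula$ must select a single truth-witness per variable and, by the clause-level constraints, simultaneously witness each clause, so it induces a satisfying assignment of $\aFormula$. The delicate point is ruling out trivial non-empty repairs obtained from isolated fragments disconnected from the variable/clause structure; this is enforced by cascading $\comparacionCaminos{\aPath}$ constraints that propagate survival from any surviving node into the rest of the backbone.

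The main obstacle is compensating for the absence of negation and path complement: in positive GXPath one cannot write implications of the form $\aPath \entoncesCamino \aPathb$ or $\aFormula \entoncesNodo \aFormulab$ directly, so every ``if $a$ survives then so must $b$'' rule must be realized by universally quantified node or path constraints together with carefully assigned data-value tags that ensure the right constraint fires on the right nodes. Engineering the gadgets so that the only way to maintain a non-empty repair is to consistently encode a satisfying assignment — while every constraint remains in $\Gposregxpath$ — is the crux of the argument and the primary technical difficulty, and is precisely what is carried out in the construction of~\cite{abriola2023complexity}.
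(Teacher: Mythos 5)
Your overall template (reduce from 3SAT, build variable/clause gadgets, equate ``non-empty repair exists'' with satisfiability, and handle the trivial-fragment issue by propagating survival through the backbone) is the right shape, but there is a genuine gap in the mechanism you propose for the variable gadget. You want to force ``at most one of $t_i,f_i$ survives'' using node-level data comparisons $\comparacionCaminos{\aPath = \aPathb}$ and $\comparacionCaminos{\aPath \neq \aPathb}$. In the positive fragment every node expression is \emph{monotone} under subgraph inclusion: if $H \subseteq H'$ then $\semantics{\aNodeExpression}_H \subseteq \semantics{\aNodeExpression}_{H'}$ for every $\aNodeExpression \in \GNodePosXPath$ (the data tests and $\comparacionCaminos{\cdot}$ operators are all existential, and there is no $\lnot$ or $\pathComplement{\cdot}$ to reverse this). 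Hence the presence of an extra node can never destroy a witness for another node, and no positive node constraint can make keeping $t_i$ and keeping $f_i$ mutually exclusive. This is not a repairable detail: Lemma~\ref{lemma:unique_subset_repair_node_pos} (Theorem 15 of the same cited work) shows that with positive \emph{node} constraints alone the subset repair is unique and computable in polynomial time, so any gadget built the way you describe collapses into the tractable case.

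The hardness for $\Gposregxpath$ has to come from \emph{path} constraints, whose semantics under Definition~\ref{def:consistency_global} is universally quantified over all pairs of nodes: requiring $(\aNode,\aNodeb) \in \semantics{\aPath}$ for every pair means that adding a node creates new pairs that must be witnessed, and if the witnessing edges were deliberately omitted from $G$ they are absent from every subgraph. This is the only source of ``negative'' information available in the positive fragment. The canonical instance appears in the paper's own Theorem~\ref{teo:pi_p_2_hardness}: the constraint $\aPath_{one\_value} = \down_{\esLabel{one\_v}}$ together with the removal of exactly the edges $(\top_i,\esLabel{one\_v},\bot_i)$ from $E_{one\_value}$ forces at most one node of each complementary pair to survive, while clause satisfaction and survival propagation are handled by further path constraints such as $\down_{\esLabel{needs}}\down_{\esLabel{all}}$. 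Your clause gadget (iii) and backbone (i) are consistent with this style and would go through, but the exclusivity step (ii) must be reworked along these lines before the reduction is sound.
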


\begin{proposition}[\cite{abriola2023complexity}, Theorem 12]\label{prop:repair_existence_hard_subset_node}
    There exists a set of constraints $R \subseteq \GNodeXPath$ such that \problemRepairExistence{\subseteq} is \NPhard{}.
\end{proposition}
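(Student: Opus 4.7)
The plan is to give a polynomial-time reduction from 3-SAT to \problemRepairExistence{\subsetInclusionCriteria} using only node expressions. Given a formula $\varphi = C_1 \wedge \cdots \wedge C_m$ over variables $x_1, \ldots, x_n$, I would construct a data-graph $G_\varphi$ and a set $R \subseteq \GNodeXPath$ such that $G_\varphi$ has a non-empty \prefrepair{\subsetInclusionCriteria} with respect to $R$ if and only if $\varphi$ is satisfiable. Since the empty graph always satisfies $R$, the question reduces to deciding the existence of any non-empty consistent subgraph (which extends to a maximal one).

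The data-graph $G_\varphi$ would contain: a distinguished witness node $s$ with a unique data value $\esDato{s}$; for each variable $x_i$, two nodes $p_i$ and $n_i$ with distinct data values representing the positive and negative literals; and for each clause $C_j$, a node $k_j$ with a unique data value. Edges would be added so that $s$ is connected to every literal and clause node, and each clause node $k_j$ is connected via a \esLabel{sat} label to the three literal nodes whose truth would make $C_j$ true. The constraint set $R$ would consist of formulas of the schematic form $\esDatoIgual{\cdot} \entoncesNodo (\cdots)$ (vacuous at other nodes), enforcing: (i) any literal or clause node must reach $s$, so every non-empty consistent subgraph contains $s$; (ii) at $s$, for each variable $i$ we cannot have both $p_i$ and $n_i$ present, expressed via a negated conjunction of $\langle \alpha_i \rangle$ tests along variable-indexed edges; (iii) at each clause node $k_j$, at least one of its three satisfying literal nodes is present, expressible as a disjunction of $\langle \aPath \rangle$ tests along the \esLabel{sat} label. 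The essential use of negation in (ii) is what takes $R$ outside $\GNodePosXPath$ and sidesteps Corollary~\ref{coro:unique_repair_node_pos}.

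Correctness would then be argued in both directions. For the forward direction, from a satisfying assignment $\sigma$, the subgraph obtained by keeping $s$, all clause nodes $k_j$, and the literal nodes selected by $\sigma$ together with the relevant edges is consistent with $R$ by construction; this non-empty consistent subgraph extends to a maximal consistent subgraph, witnessing a non-empty \prefrepair{\subsetInclusionCriteria}. For the backward direction, in any non-empty consistent subgraph $H$, requirement (i) forces $s \in H$, requirement (ii) implies that for each variable at most one of $\{p_i,n_i\}$ lies in $H$ (giving a partial truth assignment, extended arbitrarily on free variables), and requirement (iii) guarantees that every clause is satisfied by this assignment.

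The main obstacle I expect in executing this plan is calibrating the negated path tests in (ii) so that they fire exactly when both $p_i$ and $n_i$ are co-present, without spurious triggering caused by edges or data values used elsewhere in the encoding. The standard fix is to use a distinct edge label per variable index $i$ (say $\down_{\esLabel{var}_i}$) and/or exploit the data-value equality tests $\esDatoIgual{\esDato{p_i}}$, $\esDatoIgual{\esDato{n_i}}$ inside the bracketed node tests along those paths, so that each constraint cleanly isolates its intended configuration. Once these expressions are crafted, consistency checking of the construction and the two implications above are routine.
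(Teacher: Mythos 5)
Your reduction has two genuine problems. The first is fatal to correctness as described: the singleton subgraph consisting of the witness node $s$ alone is consistent with your $R$ and non-empty. All of your constraints are guarded by data-value tests of the form $\esDatoIgual{\aData} \entoncesNodo (\cdots)$: (i) and (iii) are vacuous because no literal or clause node is present, and (ii) holds at $s$ because neither $p_i$ nor $n_i$ is present. Hence $\{s\}$ extends to a non-empty maximal consistent subgraph, and \problemRepairExistence{\subseteq} answers \emph{yes} on $G_\varphi$ for every $\varphi$, satisfiable or not. Nothing in your construction forces the clause nodes to survive in a non-empty consistent subgraph, which is exactly what (iii) needs in order to do any work; your backward direction silently assumes the $k_j$ are present. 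The standard fix is to chain the mandatory nodes together (each clause node must reach the next, the last must reach the first, and $s$ must reach $c_1$), so that keeping \emph{any} node propagates an obligation to keep a representative of every clause; this is precisely the role of the $\esLabel{valid}$-cycle in the $\Pi^p_2$ construction of Theorem~\ref{teo:pi_p_2_hardness}.

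The second problem is that the statement asserts the existence of a \emph{fixed} set $R \subseteq \GNodeXPath$ for which the problem is \NPhard{}, i.e.\ hardness in data complexity with only the data-graph as input (the paper relies on this when propagating lower bounds). Your $R$ contains one constraint per variable and per clause, and you propose variable-indexed edge labels $\down_{\esLabel{var}_i}$, so both $R$ and the label alphabet grow with $\varphi$; at best you obtain combined-complexity hardness. To get a fixed $R$ you must push all instance-dependence into the graph: use a constant label alphabet, give all clause nodes one shared data value, and encode the ``not both $p_i$ and $n_i$'' exclusion structurally --- e.g.\ include a $\esLabel{one\_v}$-edge between every pair of nodes \emph{except} the forbidden pairs and impose the single constraint that every present node has a $\esLabel{one\_v}$-edge to every other present node (expressible in $\GNodeXPath$ as $\lnot\comparacionCaminos{\pathComplement{\down_{\esLabel{one\_v}}}}$), which a subset repair cannot restore once both endpoints of a missing edge are retained. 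With those two repairs, your overall strategy (a 3-SAT reduction exploiting negation to step outside Corollary~\ref{coro:unique_repair_node_pos}) is the right one.
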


By Lemma~\ref{lemma:existence_reduces_to_repair_checking}, all repairing decision problems are \NPhard{} (except \problemRepairChecking{\preccurlyeq} which is \coNPhard{}) even when considering data complexity.

%\nina{cambié los nombres de los criterios por los simbolos y achique de 0.8 a 0.7, digan si les parece bien y cambiamos lo otro. Tambien quiza sacaria "Reg" de los nombres de los fragmentos y dejaria GXPath coso y ya. Quiza incluso mencionaria que podemos decirle de una manera o de otra en la intro.}
%\santi{Me parece bien! En el original de Libkin usaban \textit{core} para referirse al fragmento donde la estrella de Kleene solo puede aplicarse a atomos de la forma $\down_{\aLabel}$, mientras que el fragmento regular hacia uso arbitrario de Kleene. Estaria bueno aclararlo.}
\begin{table}
    \centering
    \renewcommand{\arraystretch}{1.1}
    \scalebox{0.8}{
    \begin{tabular}{|c|c|c|c|c|}
    \hline
        & \textbf{\GNodePosXPath} &  \textbf{\Gposregxpath}  & \textbf{\GNodeXPath} &  \textbf{\Gregxpath}  \\
        \hline
        \textbf{$\subsetInclusionCriteria$} & \PTIME{} & \NPcomplete{} &  \NPcomplete{} & \NPcomplete{} \\
        \hline        
        \textbf{$\cardinalityCritera$} &  \PTIME{} & \NPcomplete{} & \NPcomplete{} &  \NPcomplete{}\\
        \hline        
        \textbf{$\weightCriteria$} & \PTIME{} & \NPcomplete{} & \NPcomplete{} & \NPcomplete{} \\
        \hline        
        \textbf{$\priorityzedCardinalityCriteria$} & \PTIME{} & \NPcomplete{} & \NPcomplete{} & \NPcomplete{} \\
        \hline   
        \textbf{$\priorityzedInclusionCriteria$} & \PTIME{} & \NPcomplete{} & \NPcomplete{} & \NPcomplete{} \\   
        \hline
        \textbf{$\multisetCriteria$} & \PTIME{} & \NPcomplete{} & \NPcomplete{} & \NPcomplete{} \\
        \hline
    \end{tabular}
    }
    \caption{Complexity of the \problemRepairExistence{\preccurlyeq} problem for the different preference criteria and subsets of \Gregxpath. For \problemRepairChecking{\preccurlyeq} we obtain the same results but swapping all \NPcomplete{} entries for \coNPcomplete{}. All hardness results are obtained in data complexity, while the \PTIME{} results take into account the combined complexity.}
    \label{tab:repair_results}
\end{table}

In general, a problem in \NP is called \emph{self-reducible} if its function variant can be solved in polynomial time using an oracle to decide the original problem. Every \NP-complete problem is self-reducible. This gives the following straightforward result for the functional version for which we proved in the latter that the decision version is \NP-complete for every preference criteria: 
%It follows that the functional version is FNP-complete since we proved above that the decision version is NP-complete.

%\nina{ver url comentada}
%\url{https://en.wikipedia.org/wiki/Function_problem#:~:text=Self%2Dreducibility,-Observe%20that%20the&text=In%20general%2C%20a%20problem%20in,complete%20problem%20is%20self%2Dreducible.}

\begin{remark}\label{theo:setinc-functional}
The functional problem for every preference
relation is FNP-complete, when considering constraints in $\Gregxpath$, $\Gposregxpath$, or $\GNodeXPath$. % with the exception of $\GNodePosXPath$.
\end{remark}

\section{Complexity of CQA}\label{sec:cqa}

\begin{table}
    \centering
    \renewcommand{\arraystretch}{1.2}
    \scalebox{0.8}{
    \begin{tabular}{|c|c|c|c|c|}
    \hline
        & \textbf{\GNodePosXPath} &  \textbf{\Gposregxpath}  & \textbf{\GNodeXPath} &  \textbf{\Gregxpath}  \\
         \hline
        \textbf{$\subsetInclusionCriteria$}  & \PTIME{} & \piptwoComplete{} & \piptwoComplete{} & \piptwoComplete{}\\
         \hline   
         \textbf{$\cardinalityCritera$} & \PTIME{} & \deltaptwologComplete{} & \deltaptwologComplete{} & \deltaptwologComplete{}  \\
         \hline            
        \textbf{$\weightCriteria$}  & \PTIME{} & \deltaptwoComplete{} & \deltaptwoComplete{} & \deltaptwoComplete{}\\
         \hline        
       \textbf{$\priorityzedCardinalityCriteria$} & \PTIME{} & \deltaptwoComplete{} & \deltaptwoComplete{} & \deltaptwoComplete{}\\
         \hline        
        \textbf{$\priorityzedInclusionCriteria$} & \PTIME{} & \piptwoComplete{} & \piptwoComplete{} & \piptwoComplete{}\\
         \hline
         \textbf{$\multisetCriteria$} & \PTIME{} & \piptwoComplete{} & \piptwoComplete{} & \piptwoComplete{}
         \\
         \hline 
    \end{tabular}
    }
    \caption{Complexity of the \problemCQAsubset{} problem for the different preference criteria and subsets of \Gregxpath. The hardness results hold for the data complexity of the problem, while the tractable cases take into account the combined complexity.}
    \label{tab:cqa_results}
\end{table}

\subsection{Upper bounds / Membership}

Following our previous observations, we can conclude that:
\begin{proposition}\label{teo:cqa_ptime_node_pos}
    The problem \problemCQAsubset{\preccurlyeq} is in \PTIME{} for any preference criteria $\preccurlyeq$ if $\mathcal{L} \subseteq \GNodePosXPath$.
\end{proposition}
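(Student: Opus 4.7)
The plan is to reduce \problemCQAsubset{\preccurlyeq} to a polynomial-time computation by exploiting uniqueness of the repair when $\mathcal{L} \subseteq \GNodePosXPath$. The key observation is that Corollary~\ref{coro:unique_repair_node_pos} already tells us that under the hypothesis $R \subseteq \GNodePosXPath$, the data-graph $G$ admits a \emph{unique} \prefrepair{\preccurlyeq} $H$ with respect to $R$, regardless of the preference criteria $\preccurlyeq$, and moreover this $H$ can be computed in polynomial time in $|G| + |R|$. Consequently, the universal quantification over $\prefRepairsSet{\preccurlyeq}{G}{R}$ in the definition of CQA collapses: $(v,w) \in \semantics{q}_{H'}$ holds for all $H' \in \prefRepairsSet{\preccurlyeq}{G}{R}$ if and only if $(v,w) \in \semantics{q}_H$.

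The algorithm is therefore straightforward. First, invoke the procedure given by Corollary~\ref{coro:unique_repair_node_pos} on the input $(G,R)$ to compute the unique \prefrepair{\preccurlyeq} $H$ in polynomial time. Second, since $q \in \queryLanguage \subseteq \Gregxpath$, apply Lemma~\ref{lemma:evaluate_consistency_polynomial} to compute the set $\semantics{q}_H$ in polynomial time on $|H| + |q|$. Finally, test whether the pair $(v,w)$ belongs to $\semantics{q}_H$ and answer accordingly. Each of these three steps runs in polynomial time on the combined size of the input, and their composition yields a polynomial-time decision procedure for \problemCQAsubset{\preccurlyeq}.

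There is really no obstacle here: the statement is an immediate consequence of the machinery already developed. The only subtlety worth spelling out in the proof is why the ``for all repairs'' universal quantifier collapses to a single membership test, which is precisely the content of Corollary~\ref{coro:unique_repair_node_pos} combined with the fact that $\prefRepairsSet{\preccurlyeq}{G}{R}$ is non-empty (it always contains at least the empty graph, and the unique repair dominates it). Thus the proof can be presented very concisely as a corollary of Corollary~\ref{coro:unique_repair_node_pos} and Lemma~\ref{lemma:evaluate_consistency_polynomial}.
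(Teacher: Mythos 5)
Your proof is correct and follows essentially the same route as the paper's: compute the unique $\prefrepair{\preccurlyeq}$ via Corollary~\ref{coro:unique_repair_node_pos} in polynomial time, then evaluate the query on it using Lemma~\ref{lemma:evaluate_consistency_polynomial}. The only difference is that you spell out the collapse of the universal quantifier and the query-evaluation step explicitly, which the paper leaves implicit.
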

\begin{proof}
    By Corollary~\ref{coro:unique_repair_node_pos}, if $R \subseteq \GNodePosXPath$ then there is a unique \prefrepair{\preccurlyeq}, and it is possible to find it in \PTIME{}. Therefore, we can solve \problemCQAsubset{\preccurlyeq} by computing this unique repair and asking the query on it.\looseness=-1
\end{proof}
We now show upper bounds for the other cases. To do this, we obtain two different results: one for the \textit{inclusion-based} 
and \textit{multiset} criteria, and another for the \textit{cardinality-based} ones, using similar arguments to those in~\cite{Bienvenu-tech-report}.
    
\begin{theorem}\label{teo:upper_bounds_cqa}
    Let $\mathcal{L} \subseteq \Gregxpath$. Then, the problems \problemCQAsubset{\priorityzedInclusionCriteria} and \problemCQAsubset{\multisetCriteria} are in \piptwo, \problemCQAsubset{\cardinalityCritera} is in \deltaptwo$[\log n]$, and \problemCQAsubset{\weightCriteria} and \problemCQAsubset{\priorityzedCardinalityCriteria} are in \deltaptwo.
\end{theorem}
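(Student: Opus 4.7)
The plan is to prove membership by bounding the complexity of the complement of $\problemCQAsubset{\preccurlyeq}$, which asks whether some $\preccurlyeq$-preferred repair $H$ of $G$ satisfies $(v,w)\notin\semantics{q}_H$. A recurring subroutine will be the NP query ``does there exist $H\subseteq G$ with $H\models R$ satisfying some polynomial-time-testable extra property?''; this is in \NP{} by Lemma~\ref{lemma:evaluate_consistency_polynomial}, since guessing $H$ takes polynomial space and the extra property together with $H\models R$ is checked in polynomial time.

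For $\problemCQAsubset{\priorityzedInclusionCriteria}$ and $\problemCQAsubset{\multisetCriteria}$ I would place the complement in $\Sigma^p_2$: guess $H\subseteq G$, verify in polynomial time that $H\models R$ and $(v,w)\notin\semantics{q}_H$, and then use a single \coNP{} oracle call to confirm maximality, i.e.\ that no $H''\subseteq G$ satisfies $H''\models R$ and $H\prec H''\preccurlyeq G$. The maximality test is in \coNP{} because both the preorder comparison (tractable by construction for $\priorityzedInclusionCriteria$ and by assumption for $\multisetCriteria$) and the consistency test are polynomial. This yields the \piptwo{} upper bound.

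For $\problemCQAsubset{\cardinalityCritera}$ I would decouple maximality from the query test. First compute $k^{\star}=\max\{|H|:H\subseteq G,\ H\models R\}$ by binary search on $\{0,\dots,|G|\}$ using the NP oracle ``is there $H\subseteq G$ with $H\models R$ and $|H|\geq k$?''; this costs $O(\log|G|)$ queries. Then issue one final NP query asking whether some $H\subseteq G$ with $|H|=k^{\star}$, $H\models R$, and $(v,w)\notin\semantics{q}_H$ exists. Since $\Delta^p_2[\log n]$ is closed under complement, this places the problem in $\Delta^p_2[\log n]$.

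The same two-phase idea handles the two remaining bounds. For $\problemCQAsubset{\weightCriteria}$, binary search must be performed over $[0,w(G)]$ with $w(G)=O(2^{|G|})$, so $O(|G|)$ NP queries are needed to find the optimum weight, followed by one final query, giving \deltaptwo{}. For $\problemCQAsubset{\priorityzedCardinalityCriteria}$ I would compute the lexicographically optimal profile $(c_1^{\star},\dots,c_\ell^{\star})$ level by level: having fixed $c_1^{\star},\dots,c_{i-1}^{\star}$, binary-search for $c_i^{\star}=\max\{|H\cap P_i|:H\subseteq G,\ H\models R,\ |H\cap P_j|=c_j^{\star}\ \text{for all}\ j<i\}$ using $O(\log|G|)$ NP queries; summing over $\ell\leq|G|$ and adding one final query gives a polynomial number of NP calls, i.e.\ \deltaptwo{}. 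The main subtlety I expect is verifying that each intermediate NP query stays in \NP{} once the extra equality-of-cardinality constraints on previously fixed priority levels are added, but this is fine because those constraints are polynomial-time verifiable on any guessed $H$.
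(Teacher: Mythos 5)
Your proposal is correct and follows essentially the same route as the paper: an $\exists\forall$-style (equivalently, NP-machine-with-coNP-oracle) argument for the $\Pi^p_2$ cases, binary search over $[0,|G|]$ plus one final NP query for cardinality, binary search over $[0,w(G)]$ for weights, and level-by-level optimization of $|H\cap P_i|$ for prioritized cardinality. The only differences are cosmetic (the paper uses a linear scan of each priority level where you use binary search, which does not change the resulting class).
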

\begin{proof}
    Observe that $(G,R,q,v,w)$ is a negative instance of \problemCQAsubset{\priorityzedInclusionCriteria} if and only if:
    \begin{align*}
    &\exists \, G'\subseteq G. \; \forall \, G''\subseteq G. \\ 
    &[ G' \models R \wedge (v,w) \notin \semantics{q}_{G'} \wedge \left( G' \subsetneq_P G'' \implies G'' \nvDash R \right) ]
    \end{align*}
    This is a $\exists \forall$ formula, whose predicate can be evaluated in polynomial time as long as it is in \PTIME{} (1) to check consistency with respect to $R$, (2) to answer the query $q$, and (3) to check whether $G' \priorityzedInclusionCriteria G''$. We can test (1) and (2) in \PTIME{} due to Lemma~\ref{lemma:evaluate_consistency_polynomial}, while (3) follows from our hypothesis over the preference criteria. The same holds for $\multisetCriteria$.
    
    Now we show the \deltaptwo$[\log n]$ algorithm for \problemCQAsubset{\cardinalityCritera}. Given $G$ and $R$, we start by doing binary search in the range $[0,|G|]$ to obtain the amount of elements $z$ of the $\cardinalityCritera$-repairs. This is achieved by querying the oracle with the problem ``\textit{Is there a data-graph $H \subseteq G$ such that $H \models R$ and $|H| \geq z$?}'', which is in \NP{}. Once we find the biggest $z$, we perform a final \NP{} query of the form ``\textit{Is there a data-graph $H \subseteq G$ such that $H \models R$, $|H| = z$ and $(v,w) \notin \semantics{q}_H$?}''. The output of the algorithm is the negation of the answer from this last query. The same idea can be used to obtain a \deltaptwo$[n]$ algorithm for the case of $\weightCriteria$ by doing binary search over the range $[0, w(G)]$. The number of queries required follows from the fact that $w(G) = O(2^{|G|})$ by our hypothesis over the weight function.\looseness=-1

    Finally, the \deltaptwo$[n]$ algorithm for \problemCQAsubset{\priorityzedCardinalityCriteria} consists in applying the previous algorithm for all the different priority levels $P_1,\ldots, P_\ell$: we iteratively compute the optimal intersection $k_i$ between a repair and the set $P_i$. For each $1 \leq i \leq \ell$, we perform $|P_i|$ queries of the form ``\textit{Is there a data-graph $H \subseteq G$ such that $H \models R$, $|H \cap P_j| = k_j$ for $1 \leq j \leq i-1$, and $|H \cap P_i| = z$?}'' testing all possible values for $z$. Once we obtain all the values $k_1,\ldots, k_\ell$ we ask ``\textit{Is there a data-graph $H \subseteq G$ such that $H \models R$, $|H \cap P_j| = k_j$ for $1 \leq j \leq \ell$, and $(v,w) \notin \semantics{q}_H$?}'' and return the opposite of the answer to this last query.
\end{proof}
\subsection{Lower bounds}

We prove the \piptwo-\textit{hardness} of \problemCQAsubset{\subsetInclusionCriteria} for a fixed query $q$ and a fixed set of restrictions $R \subseteq \Gposregxpath$. By adapting the proof we also show hardness for a fixed set $R' \subseteq \GNodeXPath$.\looseness=-1 %The construction is inspired by the one in~\cite{barcelo2017data}[Theorem 1].
\begin{theorem}\label{teo:pi_p_2_hardness}
    There exists a set of constraints $R\subseteq \Gposregxpath$ and a query $q$ such that the problem \problemCQAsubset{\subsetInclusionCriteria} is \piptwo-\textit{hard}. Furthermore, there is a set of constraints $R' \subseteq \GNodeXPath$ that yields the same hardness results considering the query $q$.
\end{theorem}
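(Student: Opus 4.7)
The plan is to reduce from $\forall\exists\text{-3SAT}$: given $\phi = \forall X \exists Y\, \psi(X,Y)$ with $\psi$ a 3-CNF, I would build in polynomial time a data-graph $G$, two distinguished nodes $v,w$, and the fixed $R \subseteq \Gposregxpath$ and fixed query $q$, so that $\phi$ is valid if and only if $(v,w) \in \semantics{q}_H$ for every subset repair (i.e., $\subsetInclusionCriteria$-repair) $H$ of $G$ with respect to $R$. The bridging idea is that subset repairs of $G$ should be in bijection with truth assignments $\alpha_X$ to $X$, while $q$ is engineered so that $(v,w) \in \semantics{q}_H$ iff there exists $\alpha_Y$ with $\psi(\alpha_X,\alpha_Y)$ true. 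Consistent answers then coincide with $\forall \alpha_X\, \exists \alpha_Y\, \psi(\alpha_X,\alpha_Y)$, which is precisely the validity of $\phi$.

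For each $x_i \in X$, I would attach to $G$ a small gadget containing two competing facts, say a pair of parallel edges with distinct labels $\esLabel{t}$ and $\esLabel{f}$, such that a fixed path constraint in $R$ is violated whenever both survive while subset-maximality forbids deleting both. Hence subset repairs of $G$ correspond exactly to total assignments to $X$. The $Y$-variables and the clauses of $\psi$ are encoded into a rigid subgraph of $G$ that no repair can touch: a sequential gadget per $y_j$ exposing two labelled branches, and, for each clause $C_\ell$, a gadget with three literal-paths that may certify satisfaction of $C_\ell$ through either the surviving $X$-literal (when the corresponding $x_i$-gadget has kept the matching branch) or a $Y$-branch choice. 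The fixed query $q$ is then a single $\Gposregxpath$ path expression from $v$ to $w$ that, by concatenation and $\pathIntersection$, performs one traversal guessing $\alpha_Y$ and a clause-by-clause verification that each $C_\ell$ has a satisfied literal.

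The main obstacle will be to enforce, with a single fixed positive expression (no complement, no node negation), that the existentially chosen $\alpha_Y$ is globally consistent across all clause checks. I plan to address this by routing every clause gadget through a single shared traversal of a linear $Y$-assignment spine and using $\pathIntersection$ to lock each clause-witness path to the branch taken in that spine for the corresponding $y_j$. For the ``furthermore'' statement concerning $R' \subseteq \GNodeXPath$, I would keep $G$ and $q$ unchanged and replace each path-level constraint in $R$ by a node expression that detects the same forbidden local pattern at its head node; this is possible because $\GNodeXPath$ admits negation, and the patterns we need to block --- such as a node being source of both a $\esLabel{t}$- and an $\esLabel{f}$-labelled edge --- are expressible as negated conjunctions of path-test node formulas like $\lnot(\comparacionCaminos{\down_{\esLabel{t}}} \land \comparacionCaminos{\down_{\esLabel{f}}})$.
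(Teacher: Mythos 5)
There is a fundamental obstruction to your division of labor between the repairs and the query. You arrange for the subset repairs of $G$ to be in bijection with assignments $\alpha_X$ to $X$ alone (with the $Y$/clause part of $G$ rigid), and you ask the fixed query $q$ to decide, on the polynomial-size repair $H$ encoding $\alpha_X$, whether some $\alpha_Y$ satisfies $\psi(\alpha_X,\alpha_Y)$. But by Lemma~\ref{lemma:evaluate_consistency_polynomial}, $\semantics{q}_H$ is computable in time polynomial in $|H|$ and $|q|$, so your per-repair equivalence would place the \NP-complete problem ``given $\psi$ and $\alpha_X$, does some $\alpha_Y$ satisfy $\psi(\alpha_X,\alpha_Y)$?'' (which contains 3SAT as the special case $X=\emptyset$) in \PTIME. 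Your own ``main obstacle'' is therefore not a technical wrinkle but a complexity-theoretic barrier. The concrete mechanism you propose also fails on its own terms: $\semantics{\aPath\cap\aPathb}_G$ is just the intersection of two binary relations on nodes, so intersection cannot force two different clause-checking traversals to take the \emph{same} branch at a shared $Y$-spine --- the semantics retains only endpoint pairs, not the witnessing paths, which is precisely why a fixed positive path expression cannot lock in a globally consistent guess of $\alpha_Y$.

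The paper resolves this by putting the existential quantifier over $Y$ into the repair semantics rather than into the query: the constraints force every repair either to contain an encoding of some $\alpha_Y$ extending $\alpha_X$ under which every clause node is satisfied, or else to delete the entire clause/$Y$ block; subset-maximality then performs the $\exists Y$ search, and the query is merely the two-step path $\down_{\esLabel{query}}\down_{\esLabel{query}}$ testing whether the clause node $c_1$ survived. To repair your argument you would need to make the $Y$-gadgets and clause gadgets deletable and let maximality do the existential work, which is essentially the paper's construction. Your idea for the $\GNodeXPath$ variant (replacing each path constraint by a negated node test at its source) is in the same spirit as the paper's transformation $\aPath \mapsto \lnot\comparacionCaminos{\pathComplement{\aPath}}$ and is fine in itself, but it cannot rescue the main reduction.
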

\begin{proof}
    We prove hardness by reducing from the Quantified Boolean Formula problem for $\Pi^p_2$, which consists in deciding whether $\forall X \exists Y \varphi(X,Y)$ is true or not, where $\varphi$ is a \CNF formula on the variables $X$ and $Y$. Let $X = \{x_1,\ldots,x_r\}$, $Y = \{y_1,\ldots,y_s\}$, and $c_1,\ldots,c_m$ be the clauses of $\varphi$. Moreover, for $z \in \{x,y\}$, if $z_i$ (resp.\ $\lnot z_i$) is the $k$-th literal of $c_j$, we write $l_j^k = \top^z_i$ (resp.\ $\bot_i^z$) to denote the assignment of that literal that makes the clause true.
    The intuition behind the construction of the data-graph is the following: we will define a graph $G_{\varphi}$ that encodes $\varphi(X,Y)$, and we will enforce through the constraints $R$ that every repair of $G_{\varphi}$ induces an assignment of the variables from $X$, and either (1) an assignment of the variables from $Y$ extending the one for the variables from $X$ that satisfies $\varphi(X,Y)$, if it exists, or (2) an empty assignment of the variables from $Y$. The query $q$ will detect if there exists some repair such that the variables from $Y$ are not assigned. In that case, we obtain that there is some assignment of the variables in $X$ such that no assignment for $Y$ makes the formula $\varphi(X,Y)$ true.\looseness=-1

    Let $V = V_{boolean}^x \cup V_{boolean}^y \cup V_{clause} \cup \{v,w\}$ and $E = E_{clause} \cup E_{one\_value} \cup E_{valid} \cup E_{all} \cup E_{query}$ be the nodes and edges of $G_\varphi$, respectively, where:
     \begin{align*}  
         V_{boolean}^x &= \{\star_i^x : 1 \leq i \leq r, \star \in \{\top, \bot\}\}\\
         V_{boolean}^y &= \{\star_i^y : 1 \leq i \leq s, \star \in \{\top,\bot\}\}\\
         V_{clause} &= \{c_j : 1 \leq j \leq m\}\\
         E_{clause} &= \{(c_j, \esLabel{needs}, l_j^k) : 1 \leq j \leq m, 1 \leq k \leq 3)\} \\
         &\;\;\; \cup \{(a, \esLabel{needs}, a) : a \in V \setminus V_{clause}\} \\
         E_{one\_value} &= \{(a, \esLabel{one\_v}, b) : a,b \in V\} \\
         &\;\;\; \setminus \Big(\{(\top_i^x, \esLabel{one\_v}, \bot_{i}^x) : 1 \leq i \leq r)\} \\
         &\;\;\;\;\;\;\;\; \cup \{(\top_i^y, \esLabel{one\_v}, \bot_{i}^y) : 1 \leq i \leq s)\}\Big)\\  
         E_{valid} &= \{(c_j, \esLabel{valid}, c_{j+1}) : 1 \leq j < m\}\\
         &\;\;\; \cup \{(c_m, \esLabel{valid}, \star_1^y) : \star \in \{\top, \bot\}\} \\
         &\;\;\; \cup \{(\star_i^y, \esLabel{valid}, \square_{i+1}^y) : 1 \leq i < s; \star,\square \in \{\top,\bot\}\} \\
         &\;\;\;\cup \{(\square_s^y, \esLabel{valid}, c_1) : \square \in \{\top, \bot\}\}\\
         &\;\;\; \cup \{(a, \esLabel{valid}, a) : a \in V \setminus \big(V_{boolean}^y \cup V_{clause}\big)\}\\      
         E_{all} &= \{(a, \esLabel{all}, b) : a,b \in V\}\\
         E_{query} &= \{(v, \esLabel{query}, c_1), (c_1, \esLabel{query}, w)\}
     \end{align*}
     \normalsize Fix the query as $q = \down_\esLabel{query} \down_\esLabel{query}$, and the constraints $R = \{\aPath_{satisfy}, \aPath_{one\_value}, \aPath_{valid}\}$ as:
     \vspace{-.2cm}
     \begin{align*}
         \aPath_{satisfy} &=\; \down_\esLabel{needs} \down_{\esLabel{all}}
         &\aPath_{one\_value} &=\; \down_\esLabel{one\_v}\\
         \aPath_{valid} &=\; \down_\esLabel{valid} \down_\esLabel{all}
     \end{align*}

    It can be seen, by the maximality of the subset repairs, that for any $G' \in \prefRepairsSet{\subseteq}{G_\varphi}{R}$ it holds that $\semantics{\down_\esLabel{all}}_{G'} = V_{G'} \times V_{G'}$ and for any $v \in V_{G'}$ it is the case that all self-loops that $v$ had in $G_{\varphi}$ have to remain in ${G'}$.

    Expression $\aPath_{one\_value}$ ensures that for each pair of nodes $(\top_i^z, \bot_i^z)$ with $z \in \{x, y\}$ at most one of them remains in any repair: note that in the definition of $E_{one\_value}$, these pairs do not belong to $\semantics{\aPath_{one\_value}}_{G_{\varphi}}$, and hence they will not belong to $\semantics{\aPath_{one\_value}}_{H}$ for any $H \subseteq G_{\varphi}$ due to the monotonicity of the expression. Thus, any $G' \in \prefRepairsSet{\subseteq}{G_{\varphi}}{R}$ induces a \textit{partial assignment} $\nu$ of the variables $X \cup Y$ defined as
    \begin{align*}
        \nu_{G'}(\star_i) = \begin{cases}
            \top & \top_i^\star \in V_{G'}\\
            \bot & \bot_i^\star \in V_{G'}\\
            undefined & \text{otherwise}
        \end{cases}
    \end{align*}
    
    Moreover, by maximality and the choice of constraints it holds that either $\top_i^x$ or $\bot_i^x$ will belong to $V_{G'}$: all other constraints are trivially satisfied for these nodes since they have $\esLabel{needs}$ and \esLabel{valid} loops.
     
     Expression $\aPath_{satisfy}$ ensures that whenever a clause $c_j$ is present in the repair $G'$ (i.e. $c_j \in V_{G'}$) then it is satisfied by the (partial) assignment induced by $G'$. Observe that, due to the definition of $E_{clause}$, all pairs $(a,b)$ where $a \in V \setminus V_{clause}$ satisfy $\aPath_{satisfy}$ trivially due to the $\esLabel{needs}$ loop at $a$. The remaining pairs $(c_j, b)$ will satisfy the expression if and only if at least one of the nodes that denotes an assignment that satisfies clause $c_j$ remains in the repair.

     Expression $\aPath_{valid}$ ensures that, either all variables from $Y$ have an assignment and all nodes $V_{clause}$ are in the repair, or rather none of the nodes $V_{clause}$ is in the repair and none of the variables from $Y$ has an assignment. 
     Notice that, due to $E_{valid}$, every pair of nodes $(a,b)$ such that $a \notin (V_{boolean}^y \cup V_{clause})$ satisfies this expression thanks to the \esLabel{valid} loop in $a$. 
     The pairs $(a,b)$ with $a \in V_{boolean}^y \cup V_{clause}$ will satisfy $\aPath_{valid}$ if and only if they keep their outgoing $\esLabel{valid}$ edge. This holds for all of them if and only if all clause nodes remain and at least one node of each $\{\top_i^y, \bot_i^y\}$ for every $1\leq i \leq s$.\looseness=-1

     Now we prove the correctness of the reduction. We show that $\forall X \exists Y \varphi(X,Y)$ is true if and only if $(v,w) \in \semantics{q}_{G'}$ for all $G' \in \repairs{\subseteq}{G_{\varphi}}{R}$.

     $\impliedby$) Assume that $(v,w) \in \semantics{q}_{G'}$ for all $G' \in \repairs{\subseteq}{G_\varphi}{R}$. To prove that $\forall X \exists Y \varphi(X,Y)$ is true, we will take an arbitrary assignment of $X$, define a data-graph $H \subseteq G_{\varphi}$ encoding that assignment, show that $H \models R$, pick a repair $G'$ with $H \subseteq G' \subseteq G_{\varphi}$, and finally show that $G'$ encodes an assignment of $Y$ such that $\phi(X,Y)$ holds. To prove this last fact, we will use the hypothesis that $(v,w) \in \semantics{q}_{G'}$.

     Let $\nu: X \to \{\top, \bot\}$ be a valuation of the variables from $X$. Then, define the data-graph $H = (V_H,E_H)$ as: 
    \vspace{-.15cm}
     \begin{align*}
         V_H &= \{\nu(x_i)_i^x : 1 \leq i \leq r\} \cup \{v,w\}\\
         E_H &= \{(a,\esLabel{l}, b): a,b \in V_H, \esLabel{l} \in \{\esLabel{one\_v}, \esLabel{all}\}\} \\
         &\;\;\; \cup \{(a,\esLabel{l}, a) : a \in V_H, \esLabel{l} \in \{\esLabel{needs}, \esLabel{valid}\}\}
     \end{align*}
    Since $\semantics{\down_{\esLabel{l}}}_H = V_H \times V_H$ for $l \in \{\esLabel{one\_v}, \esLabel{all}\}$, and $\semantics{\down_{\esLabel{l}}}_{H} = \{(a,a) : a \in V_H\}$ for $\esLabel{l} \in \{\esLabel{needs}, \esLabel{valid}\}$, it follows that $H \models R$. Also, note that $H \subseteq G_\varphi$, and, as a consequence, there is a data-graph $G' \in \repairs{\subseteq}{G_{\varphi}}{R}$ such that $H \subseteq G'$.
    Since we assumed $(v,w) \in \semantics{q}_{G'}$, it holds that $c_1 \in V_{G'}$. Furthermore, since $\aPath_{valid}$ is satisfied, then $V_{clause} \subseteq V_{G'}$, and moreover, at least one node of each pair $(\top_i^y,\bot_i^y)$ lies in $V_{G'}$. Due to the constraint $\aPath_{one\_value}$, at most one of each pair remains in a repair, and thus $G'$ induces a valuation over $X \cup Y$. Finally, since $\aPath_{satisfy}$ holds, we conclude that this assignment satisfies every clause.

    $\implies$) Assume that $\forall X \exists Y \varphi(X,y)$ is true. We will show that for every $G' \in \prefRepairsSet{\subseteq}{G_\varphi}{R}$ it holds that $c_1 \in V_{G'}$, which implies  that $(v,w) \in \semantics{q}_{G'}$ by the maximality of the repairs. Towards a contradiction, let us suppose that $c_1 \notin V_{G'}$ for some $G' \in \prefRepairsSet{\subseteq}{G_\varphi}{R}$, which implies that $(V_{clause} \cup V_{boolean}^y) \cap V_{G'} = \emptyset$ by $\aPath_{valid}$.

    Observe that, for each $1 \leq i \leq r$, it follows from the maximality of the repairs that either $\top_i^x \in V_{G'}$ or $\bot_i^x \in V_{G'}$: otherwise, we could add any of them together with its loops and its edges toward the rest of the nodes of $G'$ to obtain a new data-graph $G''$ such that $G' \subset G''$ and $G'' \models R$. Thus, $G'$ induces the valuation $\nu: X \to \{\top, \bot\}$ defined as
    \begin{align*}
        \nu(\star_i) = \begin{cases}
            \top & \top_i^x \in V_{G'}\\
            \bot & \bot_i^x \in V_{G'}
        \end{cases}
    \end{align*}

    Since $\forall X \exists Y \varphi(X,Y)$ is true, there is a valuation $\nu' : X\cup Y \to \{\top, \bot\}$ such that $\nu'|_{X} = \nu$ and $\nu'$ satisfies $\varphi(X,Y)$. Then, we can consider the data-graph $G''$ induced by the nodes $V_{G'} \cup V_{clause} \cup \{\nu'(y_i)_i^y : 1 \leq i \leq s\}$. It is straightforward to check that $G' \subset G''$ and $G'' \models R$, which implies that $G'$ is not a repair, thus reaching a contradiction.
    This concludes the proof for the first case. 
    
    Finally, we show a general strategy to transform a set of $\Gposregxpath$ path constraints to a set of $\GNodeXPath$ expressions, which can be used to define the set $R'$ from the Theorem's statement. 
    Given $\alpha \in \Gposregxpath$, we define $\alpha_T \in \GNodeXPath$ as $\aPath_T = \lnot \comparacionCaminos{\pathComplement{\aPath}}$. Observe that given any data-graph $H$, $H \models \aPath \iff H \models \aPath_T$: if $(v,w) \notin \semantics{\aPath}_H$ then $v \in \semantics{\comparacionCaminos{\pathComplement{\aPath}}}_H$, and consequently $v \notin \semantics{\lnot \comparacionCaminos{\pathComplement{\aPath}}}_H$. Similarly, if $v \notin \semantics{\lnot \comparacionCaminos{\pathComplement{\aPath}}}_H$ then $(v,w) \in \semantics{\pathComplement{\aPath}}_H$ for some $w$, and thus $(v,w) \notin \semantics{\aPath}_{H}$. Therefore, we may consider $R' = \{\aPath_T : \aPath \in R\}$ (where the set $R$ is the set defined in the last construction) to obtain the desired hardness result.\looseness=-1
\end{proof}
%\vspace{-1cm}
The previous proof can be adapted for the multiset criterion. \footnote{Details can be found in the supplementary material.}. 

\begin{theorem}\label{teo:multiset-criteria-hard}
    There exists a set $R \subseteq \Gposregxpath$, a query $q$, and a partial order $(\Sigma_e \cup \Sigma_n, \leq)$, such that the problem \problemCQAsubset{\multisetCriteria} is \piptwo{}\textit{-hard}. Moreover, there exists a set of constraints $R' \subseteq \GNodeXPath$ such that \problemCQAsubset{\multisetCriteria} is \piptwo{}\textit{-hard}.
\end{theorem}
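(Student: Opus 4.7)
The plan is to adapt the construction of Theorem~\ref{teo:pi_p_2_hardness} to the multiset ordering. I would again reduce from the $\Pi_2^p$-QBF problem $\forall X \exists Y\,\varphi(X,Y)$, keep the query $q = \down_\esLabel{query}\down_\esLabel{query}$, and preserve the backbone of $G_\varphi$, so that repairs still encode an $X$-assignment and then either (a) extend it with a satisfying $Y$-assignment while keeping the clause chain (making the query succeed) or (b) drop the clause chain and all $Y$-variable nodes (making the query fail). Since every $\multisetCriteria$-repair is a $\subsetInclusionCriteria$-repair by Lemma~\ref{obs:all_pref_rep_are_subset}, the implication ``$\forall X \exists Y\,\varphi \Rightarrow$ CQA accepts'' transfers verbatim.

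The obstacle is the converse direction. The ``discarding'' repair used as a witness in Theorem~\ref{teo:pi_p_2_hardness} has strictly fewer edges of every label than any ``retaining'' repair; since the multiset ordering $\leq_{mset}$ satisfies pointwise domination $\Rightarrow$ strict inequality regardless of the underlying partial order on $\Sigma_e \cup \Sigma_n$, the discarding repair is not $\multisetCriteria$-maximal and cannot witness the failure of CQA. To preserve the witness, I would augment $G_\varphi$ with a compensation gadget: $N$ fresh nodes $z_1,\dots,z_N$ pairwise connected by edges of a new label $\esLabel{comp}$, together with a positive restriction that couples the $\esLabel{comp}$ edges to the absence of $c_1$, so that the gadget survives in the discarding case but not in the retaining one. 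I would take the partial order on $\Sigma_e \cup \Sigma_n$ so that $\esLabel{comp}$ is incomparable with every previously used label, and choose $N$ large enough that the $\Theta(N^2)$ gain on $\esLabel{comp}$ cannot be matched by a retaining repair on any single base label; this makes the two kinds of repairs multiset-incomparable, hence both $\multisetCriteria$-maximal, and the reduction goes through as in Theorem~\ref{teo:pi_p_2_hardness}.

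The main technical difficulty is expressing the coupling constraint within $\Gposregxpath$, which has neither path complementation nor node negation, while the natural statement ``if $c_1$ is present then $\esLabel{comp}$ edges are forbidden'' is inherently negative. I would encode it positively in the style of $\aPath_{satisfy}$ and $\aPath_{valid}$, by forcing each $\esLabel{comp}$ edge to participate in a concatenation/intersection pattern whose only realisations in $G_\varphi$ traverse a node or edge that is necessarily deleted from every repair retaining $c_1$ (e.g.\ via a dedicated auxiliary sink whose presence is incompatible with keeping the full $\esLabel{query}$/$\esLabel{valid}$ chain anchored at $c_1$). Verifying that such a restriction is expressible in $\Gposregxpath$ and has exactly the intended effect on the set of repairs is the step I expect to require the most care.

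For the $\GNodeXPath$ part of the statement, I would apply the translation $\aPath \mapsto \lnot\comparacionCaminos{\pathComplement{\aPath}}$ used at the end of Theorem~\ref{teo:pi_p_2_hardness} to every path expression of the augmented restriction set. Since this translation preserves, on every data-graph, the models of each restriction, it preserves the set of $\multisetCriteria$-repairs, transferring the $\Pi_2^p$ lower bound to the node-expression fragment.
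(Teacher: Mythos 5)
You correctly isolate the real obstacle: with an arbitrary partial order, a ``discarding'' repair is pointwise dominated by a ``retaining'' one, pointwise domination forces strict $\lessEqualMultiset$-inequality whatever the order on $\Sigma_e\cup\Sigma_n$, and so the witness repair from Theorem~\ref{teo:pi_p_2_hardness} would no longer be $\multisetCriteria$-maximal. However, your fix does not go through as described, for two reasons. First, the coupling constraint at its heart (``$\esLabel{comp}$ edges survive iff $c_1$ is absent'') is exactly the step you leave unverified, and it is the crux of the gadget. Second, and more fundamentally, even granting such a constraint: if the gadget is mutually exclusive with $c_1$ and you have engineered gadget-keeping and clause-keeping subgraphs to be multiset-incomparable, then a maximal consistent subgraph that keeps the gadget and drops the entire clause chain is a legitimate $\multisetCriteria$-repair \emph{even when} $\forall X\exists Y\,\varphi(X,Y)$ is true --- one cannot add $c_1$ back without first deleting gadget nodes, so subset-maximality is not violated, and incomparability makes it $\multisetCriteria$-maximal. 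That repair falsifies the query $q$, so CQA would reject on positive instances and the forward direction of the reduction breaks.

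The paper's solution is considerably simpler and needs no new constraints or nodes: assign a \emph{distinct} data value to each literal node $\top_i^x,\bot_i^x,\top_i^y,\bot_i^y$ (and $\esDato{null}$ to everything else), and take the \emph{trivial} partial order $\le\,=\{(a,a): a\in\Sigma_e\cup\Sigma_n\}$. Under the trivial order $\lessEqualMultiset$ degenerates to pointwise domination of the count vectors, and any two \emph{distinct} subset repairs of $G_\varphi$ necessarily disagree on which of $\top_i^z,\bot_i^z$ they contain for some $i$ (equal node sets force equal repairs by maximality); since those nodes carry pairwise distinct data values, each repair has positive count on a data value the other lacks, so they are $\multisetCriteria$-incomparable. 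Hence $\prefRepairsSet{\multisetCriteria}{G_\varphi}{R}=\prefRepairsSet{\subsetInclusionCriteria}{G_\varphi}{R}$, and the correctness argument of Theorem~\ref{teo:pi_p_2_hardness} applies verbatim; together with Lemma~\ref{obs:all_pref_rep_are_subset} this is the whole proof. Your handling of the $\GNodeXPath$ part via $\aPath\mapsto\lnot\comparacionCaminos{\pathComplement{\aPath}}$ does coincide with the paper's.
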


\textit{Sketch of the proof}: We use the same construction from the previous proof, adding data values to the nodes and defining a trivial partial order such that $\prefRepairsSet{\subseteq}{G_\varphi}{R} = \prefRepairsSet{\multisetCriteria}{G_\varphi}{R}$, which suffices to obtain the result.\hfill \qedsymbol{}

\vspace{0.2cm}

Now we prove the \deltaptwo $[\log n]$\textit{-hardness} for the problem \problemCQAsubset{\cardinalityCritera} for fixed sets $R \subseteq \Gposregxpath$ and $R' \subseteq \GNodeXPath$. It is based on a reduction from the \deltaptwo$[\log n]$ problem \textsc{PARITY(3SAT)} \cite{eiter1997complexity}.

\begin{theorem}\label{teo:deltaptwo_hardness_CQA_cardinality}
    There is a set $R \subseteq \Gposregxpath$ and a query $q$ such that the problem $\problemCQAsubset{\cardinalityCritera}$ is \deltaptwo$[\log n]$\textit{-hard}. There is also a set of constraints $R' \subseteq \GNodeXPath$ that achieves the same hardness result.   
\end{theorem}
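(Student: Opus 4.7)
The plan is to reduce from \textsc{PARITY(3SAT)}~\cite{eiter1997complexity}: given a sequence $\varphi_1,\dots,\varphi_n$ of 3SAT formulas, decide whether the number of satisfiable ones is odd. For each $\varphi_i$, I would adapt the gadget from the proof of Theorem~\ref{teo:pi_p_2_hardness} by dropping the outer $\forall X$ quantifier, since a single 3SAT instance corresponds to only the inner $\exists$. The resulting gadget $G_i$, under a fixed set $R \subseteq \Gposregxpath$ of constraints, is calibrated so that the maximum cardinality of a consistent subgraph of $G_i$ is $C_i^{\text{sat}}$ when $\varphi_i$ is satisfiable and strictly smaller $C_i^{\text{unsat}}$ otherwise, with a designated \defstyle{marker} node $m_i$ preserved iff $\varphi_i$ is satisfiable.

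Next, I would place the gadgets $G_1,\dots,G_n$ as disjoint components of a larger data-graph $G$, together with a \emph{parity chain}: nodes $v = c_0, c_1, \dots, c_n = w$, and, for each $i$, two parallel candidate chain edges $e_i^{\esLabel{a}} = (c_{i-1}, \esLabel{a}, c_i)$ and $e_i^{\esLabel{b}} = (c_{i-1}, \esLabel{b}, c_i)$. Auxiliary constraints in $R$ would (i)~enforce that at most one of $e_i^{\esLabel{a}}, e_i^{\esLabel{b}}$ is retained in any consistent subgraph, and (ii)~couple $e_i^{\esLabel{b}}$ to the marker $m_i$ so that keeping $e_i^{\esLabel{b}}$ is consistent with $R$ iff $m_i$ is kept. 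By padding each gadget with fixed-size bookkeeping, the cardinality budget is balanced so that the cardinality-maximal repair keeps $e_i^{\esLabel{b}}$ exactly when $\varphi_i$ is satisfiable. The query is the fixed $\Gposregxpath$ path expression
\[
q \;=\; \bigl({\down}_{\esLabel{a}}^{*} \cdot {\down}_{\esLabel{b}} \cdot {\down}_{\esLabel{a}}^{*} \cdot {\down}_{\esLabel{b}}\bigr)^{*} \cdot {\down}_{\esLabel{a}}^{*} \cdot {\down}_{\esLabel{b}} \cdot {\down}_{\esLabel{a}}^{*},
\]
whose semantics is the set of pairs linked by a path containing an odd number of $\esLabel{b}$-edges. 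Then $(v,w) \in \semantics{q}_{G'}$ for every $G' \in \prefRepairsSet{\cardinalityCritera}{G}{R}$ iff the number of satisfiable $\varphi_i$ is odd, as required.

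The main technical obstacle is obtaining the ``at most one of $e_i^{\esLabel{a}}, e_i^{\esLabel{b}}$'' restriction inside positive $\Gposregxpath$, which lacks negation. The workaround is to introduce witness nodes attached to both chain edges so that their simultaneous presence forces a violation of a path constraint of the form $\aPath \entoncesCamino \aPathb$; the cardinality budget must then be tuned so that any repair that retains both edges strictly loses more than one fact elsewhere, making it cardinality-suboptimal. A secondary challenge is to propagate the satisfiability bit unambiguously to the chain choice: this is handled by making $m_i$ contribute enough cardinality to a maximal sat-configuration of $G_i$ that swapping it out for a feasible unsat-configuration is strictly worse, so the local maximization per layer is uniquely determined by $[\varphi_i \in \mathrm{SAT}]$.

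Finally, the analogous result for $\GNodeXPath$ constraints follows via the same translation employed at the end of the proof of Theorem~\ref{teo:pi_p_2_hardness}: replacing each path constraint $\aPath \in R$ by the node expression $\lnot \comparacionCaminos{\pathComplement{\aPath}}$ yields a consistency-equivalent set $R' \subseteq \GNodeXPath$ on every subgraph, so the full hardness transfers without altering the gadget.
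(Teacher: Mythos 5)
Your overall strategy --- a parity chain read off by a fixed regular expression that counts $\esLabel{b}$-edges modulo $2$ --- is genuinely different from the paper's. The paper invokes Wagner's normal form for \textsc{PARITY(3SAT)} (satisfiability of $\varphi_\ell$ is monotone in $\ell$), reduces to locating the threshold index $t$ with $\varphi_t$ satisfiable and $\varphi_{t+1}$ not, and detects the parity of $t$ with the fixed length-three query $\down_{\esLabel{query}}\down_{\esLabel{query}}\down_{\esLabel{query}}$ together with ``finish'' nodes $f_\ell$ made mutually exclusive with $c_1^{\ell+1}$. Your route would avoid the normal form entirely, which is appealing, but as written it has a genuine gap at its central mechanism.

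The gap: you need that \emph{every} $G'\in\prefRepairsSet{\cardinalityCritera}{G}{R}$ retains exactly one of $e_i^{\esLabel{a}},e_i^{\esLabel{b}}$, and that it is $e_i^{\esLabel{b}}$ precisely when $\varphi_i$ is satisfiable; otherwise either both parities are witnessed by some path from $v$ to $w$ (so $q$ holds regardless of the instance, since path semantics are existential) or some preferred repair realizes the wrong parity, and the ``for all repairs'' direction of CQA fails on yes-instances. There are two problems. First, mutual exclusion of two parallel edges between the same pair of nodes cannot be obtained by the paper's $\down_{\esLabel{one\_v}}$ device, which excludes pairs of \emph{nodes}; you would have to subdivide each branch through a witness node, which you gesture at but do not construct, and a purely cardinality-based ``soft'' exclusion needs its own global argument. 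Second, and more seriously, your coupling is one-directional (``keeping $e_i^{\esLabel{b}}$ is consistent iff $m_i$ is kept''): when $\varphi_i$ is satisfiable, nothing prevents a cardinality-maximal repair from keeping $e_i^{\esLabel{a}}$ instead of $e_i^{\esLabel{b}}$ --- the two choices contribute the same cardinality, so both configurations occur among the preferred repairs and one of them has the wrong parity. Repairing this requires making the $\esLabel{b}$-branch strictly heavier (e.g., an extra node) so that global cardinality maximization forces it whenever it is permitted, together with a swap/accounting argument in the style of the paper's Lemma~\ref{lemma:construction_parity_3sat} showing that maximization decomposes per layer and no repair trades chain correctness for cardinality elsewhere; you must also ensure the labels $\esLabel{a},\esLabel{b}$ occur nowhere outside the chain. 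The final transfer to $\GNodeXPath$ via $\aPath\mapsto\lnot\comparacionCaminos{\pathComplement{\aPath}}$ is exactly the paper's and is fine.
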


\textit{Sketch of the proof}: The problem \textsc{PARITY(3SAT)} takes as an input a set of \CNF{} formulas $\{\varphi_1,\ldots,\varphi_k\}$ and the task consists in deciding whether the number of satisfiable formulas is even. It can be assumed without loss of generality that $\varphi_{\ell+1}$ is unsatisfiable whenever $\varphi_{\ell}$ is unsatisfiable~\cite{wagner1987more}.~\looseness=-1

Our reduction defines a data-graph $G$ encoding a representation of each $\varphi_\ell$ similarly as in Theorem~\ref{teo:pi_p_2_hardness}. The constraints $R$ will enforce that every repair encodes a satisfying assignment for each $\varphi_\ell$. Moreover, whenever $\varphi_\ell$ cannot be satisfied a node $f_\ell$ will be in the repair indicating that failure. The query $q$ will be used to detect for which $\ell$ it holds that $\varphi_\ell$ is satisfied and $f_{\ell + 1}$ is in the repair, indicating that the answer to the problem is the parity of $\ell$. The preference criteria will be useful to ensure that $f_\ell$ is present if and only if $\varphi_\ell$ is unsatisfiable. \hfill \qedsymbol{}

\vspace{0.2cm}

The \deltaptwo-hardness proof for \problemCQAsubset{\weightCriteria} and \problemCQAsubset{\priorityzedCardinalityCriteria} is obtained via a reduction from the following \deltaptwo-complete problem \cite{krentel1986complexity}: given a \CNF formula $\varphi$ on clauses $c_1,\ldots,c_m$ and variables $x_1,\ldots,x_n$ decide whether the lexicographically maximum truth assignment $\nu$ satisfying $\varphi$ fulfills the condition $\nu(x_n) = \top$ (i.e. whether the last variable is assigned \texttt{true}).\looseness=-1

\begin{theorem}\label{teo:deltaptwohardness_weight_prioritized_cardinality}

    There is a set $R \subseteq \Gposregxpath$ and a query $q$ such that the problems \problemCQAsubset{\weightCriteria} and \problemCQAsubset{\priorityzedCardinalityCriteria} are \deltaptwo-hard. There is also a set of constraints $R' \subseteq \GNodeXPath$ that achieves the same hardness result.
    
\end{theorem}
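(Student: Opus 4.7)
The plan is to reduce from the \deltaptwo{}-complete problem of~\cite{krentel1986complexity}: given a satisfiable \CNF formula $\varphi(x_1,\ldots,x_n)$ with clauses $c_1,\ldots,c_m$, decide whether the lexicographically maximum satisfying assignment $\nu$ satisfies $\nu(x_n) = \top$. I would build a data-graph $G_\varphi$ along the lines of the construction from Theorem~\ref{teo:pi_p_2_hardness}, but stripped down to only the $X$-variables: nodes $V_{boolean}^x \cup V_{clause} \cup \set{v,w}$, edges $E_{clause} \cup E_{one\_value} \cup E_{all}$, query edges $(v,\esLabel{query},\top_n^x)$ and $(\top_n^x,\esLabel{query},w)$, constraints $R = \set{\aPath_{satisfy}, \aPath_{one\_value}} \subseteq \Gposregxpath$, and query $q = \down_\esLabel{query}\down_\esLabel{query}$. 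Exactly as in Theorem~\ref{teo:pi_p_2_hardness}, any consistent subset induces a partial assignment of $X$ that satisfies every clause it preserves, and $(v,w)\in\semantics{q}_{G'}$ holds exactly when $\top_n^x \in V_{G'}$.

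For $\weightCriteria$, I would give $\top_i^x$ a distinct data value $d_i^+$ of weight $2^{n-i+1}+1$ and $\bot_i^x$ a distinct data value $d_i^-$ of weight $1$, with each string padded to length $\Omega(n)$ so the bound $w(z) = O(2^{|z|})$ and polynomial-time computability are respected. The geometric gap $w(d_i^+) - w(d_i^-) > \sum_{j>i}(w(d_j^+) - w(d_j^-))$ then ensures that maximizing total weight among induced assignments coincides with lex-max ordering. I would also give the clause nodes and $v,w$ data values of a large dominating weight $M$, so every $\weightCriteria$-preferred repair must retain all clause nodes and therefore encode a total satisfying assignment (by the analysis of $\aPath_{satisfy}$ and $\aPath_{one\_value}$), and give the \esLabel{query} label positive weight so both query edges persist whenever $\top_n^x$ does. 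For $\priorityzedCardinalityCriteria$, I would reuse the same graph with priority levels $P_0 = V_{clause} \cup \set{v,w} \cup E_{clause} \cup E_{all} \cup E_{one\_value} \cup E_{query}$, $P_i = \set{\top_i^x}$ for $1 \leq i \leq n$, and $P_{n+1}$ collecting the $\bot_i^x$ nodes and any remaining elements. Lexicographically maximizing $(|H\cap P_0|,\ldots,|H\cap P_{n+1}|)$ first forces inclusion of all structural elements and then chooses $\top_i^x$ over $\bot_i^x$ in lexicographic order, producing the lex-max assignment once again. The $\GNodeXPath$ variant follows by the same transformation used at the end of Theorem~\ref{teo:pi_p_2_hardness}, taking $R' = \set{\aPath_T : \aPath \in R}$ with $\aPath_T = \lnot\comparacionCaminos{\pathComplement{\aPath}}$.

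The main obstacle is a careful accounting of the weight gap: verifying that the dominating weight $M$ on clause nodes and on $v,w$ strictly outweighs the combined contribution of all variable and edge-label weights (so that losing any clause is never compensated by any variable configuration), and that the geometric decay of the $w(d_i^+)$ strictly enforces the lex order rather than admitting ties. A secondary concern is the unsatisfiable case, which is either bypassed by the satisfiability promise underlying the Krentel problem, or handled explicitly by adding a failure node akin to the $f_\ell$ devices in the sketch of Theorem~\ref{teo:deltaptwo_hardness_CQA_cardinality} so that unsatisfiability is detected without confusing the query answer.
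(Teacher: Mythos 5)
Your proposal follows essentially the same route as the paper's proof: the same reduction from Krentel's lexicographically-maximum-satisfying-assignment problem, the same single-variable-block variant of the construction from Theorem~\ref{teo:pi_p_2_hardness} with query edges hanging off $\top_n$, geometrically decaying weights on the $\top_i$ nodes plus a dominating weight on clause nodes for $\weightCriteria$, one priority level per variable for $\priorityzedCardinalityCriteria$, and the $\lnot\comparacionCaminos{\pathComplement{\aPath}}$ translation for the $\GNodeXPath$ variant. The only substantive divergence is in the prioritization, where your singleton levels $P_i = \{\top_i^x\}$ with the $\bot_i^x$ relegated to a final level enforce the lexicographic preference more directly than the paper's grouping $P_{i+1} = \{\top_i, \bot_i\}$ (under which either choice contributes the same cardinality at that level); otherwise the two arguments coincide.
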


\textit{Sketch of the proof}: Given a \CNF{} formula $\varphi$ we will define a data-graph $G_{\varphi}$ in the same way as in Theorem~\ref{teo:pi_p_2_hardness}. Thus, by choosing the suitable constraints we will ensure that every repair represents a satisfying assignment of $\varphi$, if there is one. Using the preference criteria we will guarantee that data-graphs representing a lexicographically larger assignment are preferable. \hfill \qedsymbol{}

\section{Conclusions and Future Work}\label{sec:conclusions}

%In this work, over a framework of graph database models and a notion of consistency based on \Gregxpath constrains, we defined the problems associated with database repairing and CQA, with prioritizations given trhough preorders. 

In this work, we studied prioritized repairs for a data-graphs with integrity constraints based on the \Gregxpath language. 
We defined the problems associated with database repairing and CQA for this data model and a prioritization given through a preorder, 
%Given a set of restrictions, 
and performed a systematic study of the complexity of these problems for prioritized (subset) repairs of data-graphs, considering some of the most prominent preorders studied in the literature for other data models, and multisets (results are summarized in Tables~\ref{tab:repair_results} and~\ref{tab:cqa_results}). 
We showed that, depending on the preference criteria and the syntactic conditions imposed to the set of constraints, the repair problem ranges from polynomial-time solvable to \NP-complete/\coNP-complete, while CQA ranges from \PTIME to $\Pi_2^p$-complete.
When restricting the language to the positive fragment of \Gregxpath we obtained polynomial-time algorithms (in data complexity) for prioritized subset repairs considering only node expressions. 

Some questions in this context remain open, such as that of finding refined tractable versions of the prioritized subset repair problem that might be based on real-world applications. On the other hand, the notion of consistency that we use could be influencing negatively the complexity of the problems. An alternative  is to consider local notions of consistency that do not require the evaluation of  queries or constraints on the entire data-graph, such as those studied for Description Logics~\cite{Wasserman99-DL}, and build up from the positive fragment of \Gregxpath as a starting point to find tractable versions. 
Alternative definitions of repair semantics with preferences for data-graphs~\cite{barcelo2017data}, based on the well-known superset and symmetric-difference repairs~\cite{Cate:2015} are also worth studying in the preference-based setting. The same systematic study remains open for the complexity of the \textsc{prioritized superset repair} problem with constraints expressed in \Gregxpath; it does not seem straightforward to extend the proofs and techniques used for subset repairs in the case of path expressions or node expressions (under unfixed constraints).
It would also be interesting to study more general families of criteria, such as the ones proposed in~\cite{staworko2012prioritized}, and analyze whether the complexity of the decision problems changes for data-graphs. Another avenue of future research is the development of more fine-grained preference criteria over data-graphs, where the topology of the graph is exploited and taken into consideration in the prioritization by introducing preferences through particular shapes or patterns, or even graph parameters such as bounded treewidth or pathwidth.
In the definition of preferred repairs presented in this work, we never discuss how to handle scenarios with multiple solutions. While introducing a notion of preference reduces the set of repairs of interest, this does not address the problem of choosing one in the presence of several options. In practice, the preference criterion should be adjusted based on the general use case, in order to reduce the set of obtained repairs to a size that is acceptable in expectation. 
The complexity of CQA for prioritized superset and symmetric-difference repairs in this context has not been studied yet. Taking into account the results from~\cite{Cate:2015} and~\cite{barcelo2017data}, it is plausible that reasoning over superset repairs will be harder than in subset repairs. However, in the meantime, the complexity of the problem remains unknown. 

 %In the problems we studied, we were concerned in finding \textit{any} repair of a data-graph given a set of constraints. This assumes no prior information about the domain semantics. However, in many contexts we might have a preference criterion that may yield an ordering over all possible repairs.

%Changing the consistency semantics...
%Origin semantics... Nina: No puse nada de esto
%\nina{acortar un poco esto y ?sumar lo anterior?}

\bibliographystyle{kr}
\bibliography{biblio}

%\nina{OJO que el appendix que se ponga en la submission de KR tiene que estar dentro de las 9 pags, sino hay que sumarlo como pdf separado a la submission pero no va a ir a la version final del paper.}

\clearpage
\newpage

\appendix

\section{Appendix}

\textit{Proof of Theorem~\ref{teo:multiset-criteria-hard}}: Consider the construction from Theorem~\ref{teo:pi_p_2_hardness}. We will specify data values for the nodes of $G_{\varphi}$ and a partial order such that $\prefRepairsSet{\multisetCriteria}{G_\varphi}{R} = \prefRepairsSet{\subsetInclusionCriteria}{G_\varphi}{R}$, which suffices to obtain the result.
Let $\{\esDato{null}\} \cup \{\star_i^z : \star \in \{\top, \bot\}, i \in \N, z \in \{x,y\}\} \subseteq \Sigma_n$ be data values from $\Sigma_n$. We define the data function $D$ as:
\vspace{-.15cm}
\begin{align*}
    D(\star_i^x) =& \star_i^x \;\;\; \text{ for } \star \in \{\top, \bot\}, 1 \leq i \leq r\\
    D(\star_i^y) =& \star_i^y \;\;\; \text{ for } \star \in \{\top, \bot\}, 1 \leq i \leq s\\ 
    D(u) =& \esDato{null}  \text{ for } u \in V \setminus \left( V_{boolean}^x \cup V_{boolean}^y \right)
\end{align*}
Consider the trivial partial order over $\Sigma_e \cup \Sigma_n$ defined as $\leq \; := \{(a, a) : a \in \Sigma_e \cup \Sigma_n\}$. 
It follows from the multiset semantics that $G_1 \multisetCriteria G_2$ if $G_1^\mathcal{M}(x) \leq G_2^\mathcal{M}(x)$ for all $x \in \Sigma_e \cup \Sigma_n$.

We know that $\prefRepairsSet{\multisetCriteria}{G_\varphi}{R} \subseteq \prefRepairsSet{\subsetInclusionCriteria}{G_\varphi}{R}$ due to Lemma~\ref{obs:all_pref_rep_are_subset}. To prove the other direction, let $G',G'' \in \prefRepairsSet{\subseteq}{G_{\varphi}}{R}$, with $G' \neq G''$. We will show that $\lnot (G'' \multisetCriteria G')$, which implies that $G' \in \prefRepairsSet{\multisetCriteria}{G_{\varphi}}{R}$.

If $V_{boolean}^x \cap V_{G'} \neq V_{boolean}^x \cap V_{G''}$ then $G'$ and $G''$ are incomparable according to $\multisetCriteria$ since the nodes in $V_{boolean}^x$ all have different data values. Otherwise, if $V_{G'} \cap V_{boolean}^x = V_{G''} \cap V_{boolean}^x$ there are two cases:

\begin{itemize}
    \item If the valuation for the variables $X$ that they induce cannot be extended to a satisfying assignment for $\varphi(X,Y)$ then $c_1 \notin V_{G'}, V_{G''}$ and it holds that $V_{G'} = V_{G''}$. By maximality of the repairs this implies that $G' = G''$, which is a contradiction.

    \item In the other case it holds that $c_1 \in V_{G'}, V_{G''}$, and thus both $G'$ and $G''$ induce a valuation for the variables of $Y$ such that $\varphi(X,Y)$ is true. As noted before, if they induce different valuations then they are incomparable with respect to $\multisetCriteria$, while if they induce the same one then $G' = G''$. \hfill \qedsymbol{}

\end{itemize}
\vspace{0.2cm}

\textit{Proof of Theorem~\ref{teo:deltaptwo_hardness_CQA_cardinality}}: We prove this theorem through a reduction from the \deltaptwo$[\log n]$ problem \textsc{PARITY(3SAT)} \cite{eiter1997complexity}. Its input is a set $\{\varphi_1,\ldots,\varphi_k\}$ of \CNF formula (each one on different variables) and the problem consists on deciding whether the number of satisfiable formulas is even. Moreover, it can be assumed without loss of generality that whenever $\varphi_\ell$ is unsatisfiable then $\varphi_{\ell + 1}$ is unsatisfiable as well \cite{wagner1987more}. Thus, the problem reduces to finding the value $1 \leq t \leq k$ such that $\varphi_t$ is satisfiable and $\varphi_{t+1}$ is not. We will assume without loss of generality that each \CNF formula $\varphi_\ell$ has at least two clauses.

Our reduction is based on the tools and ideas developed in the previous proofs: given an input $\{\varphi_1,\ldots,\varphi_k\}$ of \textsc{PARITY(3SAT)}, where each formula $\varphi_\ell$ has clauses $c_1^\ell,\ldots, c_{m_\ell}^\ell$ and variables $x_1^\ell,\ldots,x_{n_\ell}^\ell$, we will build a data-graph $G = (V,L,D)$ containing a representation of all formulas $\varphi_j$ along its variables, two special nodes $v,w$ used for the query and some extra nodes $f_\ell$. Through the constraints we will enforce that, for each $\varphi_\ell$, every repair either (1) represents a satisfying valuation of it, or (2) represents a non-satisfying assignment and deletes the clause nodes. Because of the $\cardinalityCritera$ preference criteria the repairs with a satisfying assignment will be preferred (they have more nodes), and through the query $q$ we will detect how many of the formulas $\varphi_\ell$ are satisfied in these repairs.

More formally, let $V = V_{clause} \cup V_{var} \cup V_{finish} \cup \{v,w\}$ and $E = E_{clause} \cup E_{one\_value} \cup E_{valid} \cup E_{all} \cup E_{finish} \cup E_{query}$ where: 
\begin{align*}
%\footnotesize
    V_{clause} &= \bigcup_{\ell=1}^k V_{clause}^\ell\\
    V_{clause}^\ell &= \{ c_j^\ell : 1 \leq j \leq m_\ell \}\\
    V_{var} &= \bigcup_{\ell=1}^k V_{boolean}^\ell\\
    V_{boolean}^\ell &= \{\star_i^\ell : 1 \leq i \leq n_\ell, \star \in \{\bot, \top\}\}\\
    V_{finish} &= \{f_\ell : 1 \leq \ell \leq k\}\\
    E_{clause} &= \Big( \bigcup_{\ell=1}^k  E_{clause}^\ell \Big) \cup \{(a, \esLabel{needs}, a) : a \in V \setminus V_{clause}\}\\
    E_{clause}^\ell &= \{(c_j^\ell, \esLabel{needs}, l_{j,p}^\ell) : 1 \leq j \leq m_j, 1 \leq p \leq 3 \text{ where } \\
    &\;\;\;l_{j,p}^\ell = \top_i^\ell \text{ (resp. } \bot_i^\ell \text{)} \text{ if } x_i^\ell \text{ is the } pth \text{ literal of } c_j^\ell \\ &\;\;\; \text{(resp. } \lnot x_i^\ell\text{)}\}\\
    E_{one\_value} &= \{(a, \esLabel{one\_v}, b) : a, b \in V\}\\
    &\;\;\;\setminus \left( \bigcup_{\ell=1}^k \{(\top_i^\ell, \esLabel{one\_v}, \bot_i^\ell) : 1 \leq i \leq n_\ell\} \right)\\
    E_{valid} &= \left(\bigcup_{\ell=1}^k E_{valid}^\ell\right) \cup \{(a, \esLabel{valid}, a) : a \in V \setminus V_{clause}\}\\
    E_{valid}^\ell &= \{(c_j^\ell, \esLabel{valid}, c_{j+1}^\ell) : 1 \leq j < m_\ell\} \\
    &\;\;\; \cup \{(c_{m_\ell}^\ell, \esLabel{valid}, c_1^\ell)\} \\
    E_{all} &= \{(a, \esLabel{all}, a) : a \in V)\}\\
    E_{finish} &= \{(a, \esLabel{finish}, b) : a, b \in V\} \\
    &\;\;\; \setminus \left( \{(f_\ell, \esLabel{finish}, c_1^{\ell+1}) : 1 \leq \ell < k\} \right)\\
    E_{query} &= \{(v, \esLabel{query}, c_1^\ell) : 1 \leq \ell \leq k, \ell = 0 \bmod 2\}\\
    &\;\;\;\; \cup \{(c_1^\ell, \esLabel{query}, f_\ell) : 1 \leq \ell \leq k, \ell = 1 \bmod 2\} \\
    &\;\;\;\; \cup \{(f_\ell, \esLabel{query}, w) : 0 \leq \ell \leq k\} %\\
    %&\;\;\;\;
\end{align*}
\normalsize Fix the query as $q = \down_{\esLabel{query}} \down_{\esLabel{query}} \down_{\esLabel{query}}$ and the constraints $R = \{\aPath_{satisfy}, \aPath_{one\_value}, \aPath_{valid}, \aPath_{finish}\}$ as:
\begin{align*}
    \aPath_{satisfy} &= \down_{\esLabel{needs}} \down_{\esLabel{all}} &\aPath_{one\_value} &= \down_\esLabel{one\_v}  \\
    \aPath_{valid} &= \down_{\esLabel{valid}} \down_{\esLabel{all}} &\aPath_{finish} &= \down_{\esLabel{finish}}
\end{align*}

The first three constraints work in the same way as the ones from Theorem~\ref{teo:pi_p_2_hardness}. Meanwhile, the last one enforces that both $f_\ell$ and $c_1^{\ell+1}$ cannot remain in the repair. Observe that if $c_1^\ell$ is deleted then all other nodes $c_j^\ell$ must be deleted as well because of the constraint $\aPath_{valid}$.
The following lemma justifies the constructions:

\begin{lemma}\label{lemma:construction_parity_3sat}
    $\varphi_\ell$ is satisfiable if and only if the node $c_1^\ell$ lies in all $\cardinalityCritera$-preferred subset repairs of $G$ with respect to $\aRestrictionSet$.
\end{lemma}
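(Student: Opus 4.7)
The plan is to first establish four structural facts that hold for any subset repair $G' \in \prefRepairsSet{\subseteq}{G}{\aRestrictionSet}$ (and hence for any $\cardinalityCritera$-preferred repair by Lemma~\ref{obs:all_pref_rep_are_subset}), using reasoning analogous to that in the proof of Theorem~\ref{teo:pi_p_2_hardness}. Namely: $\aPath_{one\_value}$ together with set-maximality forces $G'$ to retain exactly one of $\top_i^\ell$ or $\bot_i^\ell$ for each variable $x_i^\ell$, so $G'$ induces a valuation $\nu_\ell$; the $\esLabel{valid}$ cycle built by $E_{valid}^\ell$ together with $\aPath_{valid}$ forces $V_{clause}^\ell$ to be either entirely present in or entirely absent from $V_{G'}$; $\aPath_{satisfy}$ forces that if $V_{clause}^\ell \subseteq V_{G'}$ then $\nu_\ell$ satisfies $\varphi_\ell$; and $\aPath_{finish}$ forbids keeping both $f_\ell$ and $c_1^{\ell+1}$ in $V_{G'}$.

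With these in hand, the ($\Leftarrow$) direction is short: if $\varphi_\ell$ is unsatisfiable, then combining the second and third structural facts shows that $c_1^\ell$ cannot appear in any subset repair, and in particular in no $\cardinalityCritera$-preferred repair. Contrapositively, if $c_1^\ell$ belongs to every $\cardinalityCritera$-preferred repair then $\varphi_\ell$ is satisfiable.

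For the ($\Rightarrow$) direction, suppose $\nu^*$ is a satisfying assignment for $\varphi_\ell$ and, towards a contradiction, let $G'$ be a $\cardinalityCritera$-preferred repair with $c_1^\ell \notin V_{G'}$. The plan is to exhibit a consistent $G'' \subseteq G$ with $|G''| > |G'|$, which directly contradicts cardinality-maximality. I would build $G''$ by starting from $G'$, replacing its boolean nodes for the variables of $\varphi_\ell$ by those dictated by $\nu^*$, inserting every node of $V_{clause}^\ell$, deleting $f_{\ell-1}$ if it was present (so that $\aPath_{finish}$ survives once $c_1^\ell$ is added), and then closing under every edge of $G$ joining two remaining nodes. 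Consistency for formulas $\varphi_{\ell'}$ with $\ell' \neq \ell$ is untouched, since variables and clauses across formulas are disjoint and the only cross-formula constraint is $\aPath_{finish}$, which is handled by the deletion step; for $\varphi_\ell$ itself the new valuation meets $\aPath_{satisfy}$, the complete $\esLabel{valid}$ cycle meets $\aPath_{valid}$, and $\aPath_{one\_value}$ holds by construction.

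The main obstacle I expect is the cardinality bookkeeping. Swapping boolean nodes leaves $|V|$ unchanged and perturbs the edge count only by an $O(n_\ell)$ correction stemming from the single asymmetric omission in $E_{one\_value}$. Inserting $V_{clause}^\ell$ adds $m_\ell$ new nodes plus $\Theta(m_\ell \cdot |V|)$ new edges, because between any new clause node and any retained node almost all of $E_{one\_value}$, $E_{all}$ and $E_{finish}$ are present. Removing $f_{\ell-1}$ loses only $1$ node and $O(|V|)$ incident edges. Using the \wlogen{}assumption $m_\ell \geq 2$ built into the reduction, the net change is strictly positive, yielding the desired contradiction and completing the proof of the lemma.
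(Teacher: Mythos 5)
Your proof is correct and follows essentially the same route as the paper's: derive the structural consequences of the four constraints (one literal per variable, all-or-nothing clause cycles via $\aPath_{valid}$, satisfaction via $\aPath_{satisfy}$, mutual exclusion via $\aPath_{finish}$), obtain the backward direction from the first three, and for the forward direction perform a cardinality-increasing swap that installs a satisfying assignment together with $V_{clause}^\ell$ and compares degrees node by node. One small point in your favour: you correctly identify $f_{\ell-1}$ (the source of the missing $\esLabel{finish}$ edge into $c_1^\ell$) as the node to delete so that $\aPath_{finish}$ survives, whereas the paper's write-up removes $f_\ell$, which appears to be an off-by-one slip.
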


\begin{proof}
    $\implies)$ Suppose $\varphi_\ell$ is satisfied by the valuation $\nu:\{x_i^\ell : 1 \leq i \leq n_\ell\} \to \{\top, \bot\}$ and that there is some $\cardinalityCritera$-preferred subset repair $G'$ such that $c_1^\ell$ does not belong to the repair. Thus $c_j^\ell \notin V_{G'}$ for all $1 \leq j \leq m_\ell$ due to the constraint $\aPath_{valid}$, and the assignment encoded by the nodes $V' = V_{boolean}^\ell \cap V_{G'}$ is a non-satisfying one for $\varphi_\ell$. But then we can perform the following swap to obtain a subset repair which is preferable according to the $\cardinalityCritera$ preference criteria: remove the nodes $V'$ and the node $f_\ell$ from $G'$\footnote{We know that $f_\ell \in V_{G'}$ because $G'$ is maximally consistent.} and add instead the nodes $\{\nu(x_i^\ell)_i^\ell : 1 \leq i \leq n_\ell\} \cup V_{clause}^\ell$ along all its outgoing edges from $G$ to nodes that are still in $V_{G'} \setminus V'$. It can be checked that the obtained data-graph satisfies all constraints from $R$, and to conclude that it is preferred note that the number of nodes increased (because $m_\ell > 1)$, the outdegree of all added nodes from $V_{boolean}^\ell$ is bigger than or equal to the outdegree of the nodes from $V'$, and the outdegree of each $c_j^\ell$ is bigger than or equal to the one from $f_\ell$.

    $\impliedby)$ If $\varphi_\ell$ is unsatisfiable then $c_1^\ell$ does not belong to any repair of $G$ with respect to $\aRestrictionSet$ because of the imposed constraints. The argument is analogous to the one from Theorem~\ref{teo:pi_p_2_hardness}.
\end{proof}

Now we argue the correctness of the construction: we claim that the unique $t$ such that $\varphi_t$ is satisfiable and $\varphi_{t + 1}$ is unsatisfiable is even if and only if $(v,w) \in \semantics{q}_{G'}$ for all $G' \in \prefRepairsSet{\cardinalityCritera}{G}{R}$. This is a direct consequence of Lemma~\ref{lemma:construction_parity_3sat}: $\varphi_t$ is satisfiable and $\varphi_{t+1}$ is unsatisfiable if and only if $c_1^t$ belongs to all $\cardinalityCritera$-preferred subset repairs of $G$ with respect to $R$, while $c_1^{t+1}$ does not belong to a single one. By the constraint $\aPath_{finish}$ this implies that $f_t$ belongs to all repairs, and as a consequence if $t$ is even then the path $v \down_{\esLabel{query}} c_1^t \down_{\esLabel{query}} f_t \down_{\esLabel{query}} w$ belongs to all repairs. Meanwhile, if $t$ is odd then there is no path that satisfies the query, since for all $\ell < t$ it is the case that the node $f_\ell$ is not in the data-graph, while for $\ell > t$ the node $c_1^\ell$ is the one missing.  \hfill \qedsymbol{}

\vspace{0.2cm}  

\textit{Proof of Theorem~\ref{teo:deltaptwohardness_weight_prioritized_cardinality}}: We consider the case for $\problemCQAsubset{\weightCriteria}$, and explain how to adapt the proof for the case of the $\priorityzedCardinalityCriteria$ preference criteria. The construction is similar to that in the previous proof. Given a \CNF formula $\varphi$ on clauses $c_1,\ldots,c_m$ and variables $x_1,\ldots,x_n$ we define a data-graph $G = (V, L, D)$ containing a representation of $\varphi$. The constraints will enforce that every repair represents a satisfying assignment of $\varphi$ (if such assignment exists), and via the prioritization we will enforce this repair to contain a lexicographically maximum assignment. Finally, using the query $q$ we will detect if the node $\top_{n}$ remained in the repair.

Let $\{i : i \in \N\} \subseteq \Sigma_n$. We define $V = V_{clause} \cup V_{var} \cup \{v,w\}$ and $E = E_{clause} \cup E_{one\_value} \cup E_{valid} \cup E_{all} \cup E_{query}$ as
\begin{align*}
    %\footnotesize
    V_{var} &= \{\star_i : 1 \leq i \leq n, \star \in \{\top, \bot\} \}\\
    V_{clause} &= \{c_j : 1 \leq j \leq m\}\\
    E_{clause} &= \{(c_j, \esLabel{needs}, l_j^k) : 1 \leq j \leq m, 1 \leq k \leq 3 \text{ where }\\
    &\;\;\;\; l_j^k = \top_i \text{ ( resp. }  \bot_i \text{) if } x_i \text{ (resp. } \lnot x_i \text{) is the } k \text{th}\\
    &\;\;\;\; \text{ literal of } c_j\}\\
    E_{one\_value} &= \{(a, \esLabel{one\_v}, b) : a, b \in V\} \\
    &\;\;\;\; \setminus \{(\top_i, \esLabel{one\_v}, \bot_i) : 1 \leq i \leq n\}\\
    E_{valid} &= \{(c_j, \esLabel{valid}, c_{j+1}) : 1 \leq j < m\} \\
    &\;\;\;\; \cup \{(c_m, \esLabel{valid}, c_1)\}\\
    E_{all} &= \{(a, \esLabel{all}, b) : a,b \in V\}\\
    E_{query} &= \{(v, \esLabel{query}, \top_n), (\top_n, \esLabel{query}, w)\}
\end{align*}

\normalsize The data values are assigned as
\begin{align*}
    D(\top_i) &= 2^{n+1-i}\\
    D(\bot_i) &= 1\\
    D(c_j) &= 2^{n+2}
\end{align*}

and the weights for the data values as $w(i) = i$, while $w(\aLabel) = 0$ for all $\aLabel \in \Sigma_e$.
The query is fixed as $q = \down_\esLabel{query} \down_\esLabel{query}$ and the constraints $R = \{\aPath_{satisfy} \aPath_{one\_value}, \aPath_{valid}\}$ identically as in Theorem~\ref{teo:pi_p_2_hardness}.

As before, due to $\aPath_{one\_value}$ it holds that every repair induces an assignment of the variables $x_1,\ldots,x_n$. Since the clause nodes $c_1,\ldots,c_m$ have weight $2^{n+2} > \sum_{i=1}^n 2^{n+1-i}$ repairs representing satisfying assignments are always preferred to non-satisfying ones. Also, observe that the weights chosen for the nodes $\top_1,\bot_1,\ldots,\top_n,\bot_n$ ensure that from the set of satisfying assignments the one preferred will be the lexicographically maximum one. Thus, it can be proven that there is only one subset repair: $\emptyset$ if $\varphi$ is unsatisfiable and the data-graph $G' \subseteq$ representing the lexicographically maximum satisfying assignment otherwise. In the latter case, if $G'$ is the unique subset repair it holds that $G', v, w \models q$ if and only if $\top_n \in V_{G'}$ which implies that $\nu(x_n) = \top$ in the lexicographically maximum satisfying assignment.

When considering the $\priorityzedCardinalityCriteria$ criteria we can enforce an analogous preference by defining the prioritization
\begin{align*}
    &P_1 = \{c_1,\ldots,c_m,v,w\} \cup E\\
    &P_{i+1} = \{\top_i, \bot_i\} \text{ for } 1\leq i \leq n\\
\end{align*}

\end{document}